\newtheorem{condition}{Condition}
\newtheorem{assumption}{Assumption}
\newtheorem{remark}{Remark}
\newtheorem{theorem}{Theorem}
\newtheorem{lemma}{Lemma}
\begin{document}
\title{The Smoluchowski-Kramers limit of stochastic differential equations with arbitrary state-dependent friction}

\author{Scott Hottovy}
\affiliation{ University of Wisconsin, Department of Mathematics, Madison, Wisconsin 53706 USA, \\ University of Arizona, Department of Mathematics, Tucson, Arizona 85721 USA}
\author{Austin McDaniel}
\affiliation{University of Arizona,  Department of Mathematics, Tucson, Arizona 85721 USA}
\author{Giovanni Volpe}
\affiliation{Soft Matter Lab, Department of Physics, Bilkent University, Ankara 06800, Turkey, UNAM-Institute of Material Science and Nanotechnology, Bilkent University, 06800 Ankara, Turkey}
\author{Jan Wehr}
\affiliation{University of Arizona,  Department of Mathematics, Tucson, Arizona 85721 USA}


\begin{abstract}
We study a class of systems of stochastic differential equations describing diffusive phenomena.  The Smoluchowski-Kramers approximation is used to describe their dynamics in the small mass limit. Our systems have arbitrary state-dependent friction and noise coefficients.  We identify the limiting equation {and,} in particular, the additional drift term that appears in the limit is expressed in terms of the solution to a Lyapunov matrix equation.  The proof uses a theory of convergence of stochastic integrals developed by Kurtz and Protter. The result is sufficiently general to include systems driven by both white and Ornstein-Uhlenbeck colored noises. We discuss applications of the main theorem to several physical phenomena, including the experimental study of Brownian motion in a diffusion gradient. 
\end{abstract}

\maketitle

\section{Introduction}
\label{sec:intro}

For an open subset $\mathcal{U}\subset\mathbb{R}^d$, consider the $2d$-dimensional stochastic differential equation (SDE):
\begin{equation}\label{eq:SDEgeneral}
\left\{\begin{array}{rclrcl}
d\bm{x}_t^m & = & \bm{v}_t^m\,dt  &\bm{x}_0^m &=& \bm{x}, \\
d\bm{v}_t^m & = & \left[ \frac{\bm{F}(\bm{x}_t^m)}{m} - \frac{\bm{\bm{\gamma}}(\bm{x}_t^m)}{m}\bm{v}^m_t \right] \,dt + \frac{\bm{\sigma}(\bm{x}_t^m)}{m}\,d\bm{W}_t \quad &\bm{v}_0^m &=& \bm{v},
\end{array}\right.
\end{equation} 
with $\bm{F}:\mathcal{U}\mapsto\mathbb{R}^d$, $\bm{\gamma}:\mathcal{U}\rightarrow \mathbb{R}^{d\times d}$ a $d\times d$ invertible matrix-valued function, $\bm{\sigma}:\mathcal{U}\rightarrow \mathbb{R}^{d\times k}$ and $\bm{W}$ a $k$-dimensional Wiener process.  The above SDE provides a framework to model many physical systems, from colloidal particles in a fluid {\cite{nelson}} to a camera tracking an object \cite{papanicolaou2010}. For example, the motion of a Brownian particle can be modeled using an SDE where $x$ and $v$ are one-dimensional and ${\gamma}(x) = {k_BT \over D(x)}$ and ${\sigma}(x) = {k_BT\sqrt{2} \over \sqrt{D(x)}}$ (see description below in Section~\ref{sec:BP}).  In fact, the original motivation for the present work was to provide a mathematical explanation of the experimental observation of a \emph{noise-induced drift} in \cite{volpe2010}. While in this model the coefficients $\gamma(x)$ and $\sigma(x)$ are constrained by the fluctuation-dissipation relation such that ${\gamma}(x) \propto {\sigma}(x)^2$ \cite{kubo}, our main result, Theorem~\ref{theorem}, does not assume it and has a much more general reach including applications in other fields.

Theorem~\ref{theorem} says that, under the assumptions stated in Section~\ref{sec:SKa}, the $\bm{x}$-component of the solution of equation~(\ref{eq:SDEgeneral}) converges in $L^ 2$, with respect to the topology on $C_{\mathcal{U}}([0,T])$ (i.e. the space of continuous functions from $[0,T]$ to $\mathcal{U}$ with the uniform metric), to the solution of the SDE
\begin{equation}\label{eq:SKlimit}
d\bm{x}_t=\left[ \bm{\gamma}^{-1}(\bm{x}_t)\bm{F}(\bm{x}_t)+\bm{S}(\bm{x}_t) \right] dt + \bm{\gamma}^{-1}(\bm{x}_t)\bm{\sigma}(\bm{x}_t)d\bm{W}_t,
\end{equation}
with the original initial condition $\bm{x}_0 = \bm{x}$, where $\bm{S}(\bm{x}_t)$ is the \emph{noise-induced drift} whose $i^\text{th}$ component equals
\begin{equation}
\label{eq:spurious}
S_i(\bm{x}) =\frac{\partial}{\partial x_{l}}[(\gamma^{-1})_{ij}(\bm{x})]J_{j{l}}(\bm{x}),
\end{equation}
where $\bm{J}$ is the matrix solving the Lyapunov equation
\begin{equation}\label{eq:Lyapunov}
\bm{J}\bm{\gamma}^* + \bm{\gamma}\bm{J} = \bm{\sigma}\bm{\sigma}^*.
\end{equation}
Throughout the paper we use Einstein summational convention and ``$^*$" denotes the transposition of a matrix. The limiting SDE~(\ref{eq:SKlimit}) is given in the It\^o form, while we provide in Section~\ref{sec4} the corresponding Stratonovich form. Note that for $m > 0$ the process $\bm{x}^m_t$ has bounded variation and thus all definitions of stochastic integral lead to the same form of SDE~(\ref{eq:SDEgeneral}).

The zero-mass limits of equations similar to equation~(\ref{eq:SDEgeneral}) have been studied by many authors beginning with Smoluchowski \cite{smoluchowski1916} and Kramers \cite{kramers1940}. In the case where $F=0$ and $\gamma$ and $\sigma$ are constant, the solution to equation~(\ref{eq:SDEgeneral}) converges to the solution of equation~(\ref{eq:SKlimit}) almost surely \cite{nelson}.  The case including an external force was treated by entirely different methods in \cite{schuss}. The problem of identifying the limit for position-dependent noise and friction was studied in \cite{hanggi1982} for the case when the fluctuation-dissipation relation is satisfied and in \cite{sancho1982} for the general one-dimensional case (the multidimensional case is also discussed there but without complete proof). The homogenization techniques described in \cite{papanicolaou1975,schuss,pavliotis} were used to compute the limiting backward Kolmogorov equation as mass is taken to zero in \cite{hottovy2012}. In \cite{Pardoux2003} convergence in distribution is proven rigorously for equations of the same type as equation~(\ref{eq:SDEgeneral}), under somewhat stronger assumptions than those made here. The rigorous proof of convergence in probability for $\bm{\gamma}$ constant and $\bm{\sigma}$ position-dependent is given in \cite{freidlin2004}. The present paper contains the first rigorous derivation of the zero-mass limit of equation~(\ref{eq:SDEgeneral}) for a multidimensional system with general friction and noise coefficients.

Systems with colored noise can also be treated within the above (suitably adapted) framework. For example, the one-dimensional equation driven by an Ornstein-Uhlenbeck (OU) noise with a short correlation time $\tau$
\begin{equation}\label{eq:Newton}
m \ddot{x}^m_t =F(x_t^m)-\gamma(x^m_t) \dot{x}^m_t + \sigma(x_t^m){\eta}_t^\tau
\end{equation}
can be rewritten in the form of equation~(\ref{eq:SDEgeneral}), by defining $\bm{v}_t^m = (v_t^m,\eta_t^\tau)^*$, $\bm{x}_t^m = (x_t^m,\zeta_t^\tau)^*$ and $\tau = \tau_0m$ \cite{pavliotis}, as
\begin{equation}\label{eq:SDEOUCN}
\left \{ \begin{array}{rcl}
dx_t^m &=& v_t^m\,dt \\
dv_t^m &=& \left[ \frac{F(x_t^m)}{m}-\frac{\gamma(x_t^m)}{m} v_t^m + \frac{\sigma(x_t^m)}{m}\eta_t^\tau \right] dt \\
d\zeta_t^\tau &=& \eta_t^\tau dt \\
d\eta_t^\tau &=& -\frac{a \eta_t^\tau}{\tau}\,dt + \frac{\sqrt{2\lambda}}{\tau}\,dW_t .
\end{array} \right.
\end{equation}
The details are worked out in Section~\ref{sec:OU}. SDE with colored noise were first studied in \cite{wong1965}, where it was shown that, as the correlation time of the noise goes to zero, the stochastic integral converges to the Stratonovich integral with respect to the Wiener process. This result was generalized in \cite{kurtz91} and similar systems were studied in \cite{kupferman2004} by homogenization methods. Our method is sufficiently general to permit us to recover the results obtained in these works as well as those in \cite{wong1965,freidlin2011}.

In Section~\ref{sec:SKa} we state and prove the main result, Theorem~\ref{theorem}, for arbitrary dimension $d$. In Section~\ref{sec:1D} we explicitly calculate the limit for $d=1$. In Section~\ref{sec5} we present a series of applications of our result. In Section~\ref{sec:BP} we study the equations describing the experiment on Brownian motion in a diffusion gradient that originally motivated this work \cite{volpe2010}. In Section~\ref{sec:OU} we study the case of SDE driven by OU colored noise and find the explicit limit for constant (Section~\ref{sec:constfric}) and position-dependent (Section~\ref{sec:thermo}) friction.  In Section~\ref{sec:3DBP} we study a three-dimensional Brownian particle on which a non-conservative external force is acting, and in Section~\ref{sec:magnetic} we consider the more specific case of a magnetic force. In Section 5 we reformulate the main result using Stratonovich formalism.

\begin{acknowledgements}
We thank Thomas Kurtz for the crucial references \cite{kurtz91} and \cite{blount1991}, and Krzysztof Gaw\c edzki for pointing out some earlier results. AM was partially supported by the NSF grant DMS 1312711.  JW was partially supported by NSF grants DMS 1009508 and DMS 1312711. SH was partially supported by the NSF under grant DMS 1009508 and grant No. DGE0841234. GV was partially supported by the Marie Curie Career Integration Grant (MC-CIG) No. PCIG11 GA-2012-321726.
\end{acknowledgements}

\section{Smoluchowski-Kramers approximation}\label{sec:SKa}

For the main theorem, we assume $\bm{x}_t^m,\bm{x}_t\in\mathcal{U}\subset\mathbb{R}^d$, an open set, and $\bm{v}_t^m\in\mathbb{R}^d$ for all $0\leq t\leq T$. For an arbitrary vector $\bm{a}\in\mathbb{R}^d$, $|\bm{a}|$ is the Euclidean norm and, for a $d\times d$ matrix $\bm{A}\in\mathbb{R}^{d\times d}$, $|\bm{A}|$ is the induced operator norm. We now state the assumptions and main theorem. 
\begin{assumption}\label{assume:bddcoeffs}
The coefficients $\bm{F},\bm{\gamma},\bm{\sigma}$ are continuously differentiable functions. Furthermore, the smallest eigenvalue $\lambda_1(\bm{x})$ of the symmetric part ${1\over2} (\bm{\gamma}+\bm{\gamma}^*)$ of the matrix $\bm{\gamma}$ is positive uniformly in ${\bm x}$, i.e. 
\begin{equation}\label{eq:AssumeEig}
\lambda_1(\bm{x}) \ge c_\lambda > 0.
\end{equation}
It follows that $(\bm{\gamma}(\bm{x}) \bm{y}, \bm{y}) \geq c_{\lambda} (\bm{y}, \bm{y})$ and $|\bm{\gamma}(\bm{x})|\geq c_{\lambda}$ for all $\bm{x}\in\mathcal{U}, \bm{y} \in \mathbb{R}^d$ and that the real parts of the eigenvalues of $\bm{\gamma}(\bm{x})$ are bounded below by $c_\lambda$.
\end{assumption}
\begin{remark}
The lower bounds on $\bm{\gamma}$ and its eigenvalues are crucial for the estimates of the proof. A system with vanishing friction, i.e. $\bm{\gamma}(\bm{x})={\bm 0}$, behaves differently \cite{freidlin2013}. 
\end{remark}
\begin{assumption}\label{assume:existence}
With probability one there exist global unique solutions, {defined on $[0,T]$},  to equation~(\ref{eq:SDEgeneral}) for each $m$ and to equation~(\ref{eq:SKlimit}). In particular, there are no explosions.  
\end{assumption}
\begin{assumption}\label{assume:stop}
With probability one there exists a {compact set} $\mathcal{K}\subsetneq\mathcal{U}$ such that, for all $m>0$, $\bm{x}_t^m\in\mathcal{K}$ for all $t \in [0, T]$.
\end{assumption}
The existence of such a set $\mathcal{K}$, together with the continuity of the coefficients $\bm{F}$, $\bm{\gamma}$ and $\bm{\sigma}$, implies that there exists a constant $C_T$, depending only on $T$ (in particular, independent of $m$), such that 
\begin{equation}
|\bm{F}(\bm{x}_t^m)|, \, |\bm{\sigma}(\bm{x}_t^m)|, \, |\bm{\gamma}(\bm{x}_t^m)|\leq C_T,
\end{equation} 
for all  $t \in [0, T]$. 

\begin{theorem}\label{theorem}
Suppose SDE~(\ref{eq:SDEgeneral}) satisfies Assumptions~1-3. Let $(\bm{x}_t^m,\bm{v}_t^m)\in\mathcal{U}\times\mathbb{R}^{d}$ be its solution with initial condition $(\bm{x},\bm{v})$ independent of $m$ and let $\bm{x}_t$ be the solution to the It\^o SDE~(\ref{eq:SKlimit}) with the same initial position $\bm{x}_0 = \bm{x}$.  Then
\begin{equation}
\lim_{m\rightarrow 0} E\left [\left (\sup_{0\leq t\leq T}|\bm{x}_t^m-\bm{x}_t|\right )^2\right ]=0.
\end{equation} 
\end{theorem}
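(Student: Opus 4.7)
The plan is to rewrite $\bm{x}_t^m$ as the candidate limiting expression plus residual terms that vanish as $m\to 0$, and then close with a Gronwall estimate combined with the Kurtz--Protter theorem. Starting from the velocity equation in~(\ref{eq:SDEgeneral}), left-multiplying by $\bm{\gamma}^{-1}(\bm{x}_s^m)$, and using $\bm{v}_s^m\,ds=d\bm{x}_s^m$ gives
\begin{equation*}
\bm{x}_t^m-\bm{x}=\int_0^t\bm{\gamma}^{-1}(\bm{x}_s^m)\bm{F}(\bm{x}_s^m)\,ds+\int_0^t\bm{\gamma}^{-1}(\bm{x}_s^m)\bm{\sigma}(\bm{x}_s^m)\,d\bm{W}_s-m\int_0^t\bm{\gamma}^{-1}(\bm{x}_s^m)\,d\bm{v}_s^m.
\end{equation*}
Since $\bm{x}_s^m$ is absolutely continuous (hence of bounded variation), integration by parts on the last integral introduces no quadratic covariation, splitting it into a boundary piece $-m[\bm{\gamma}^{-1}(\bm{x}_t^m)\bm{v}_t^m-\bm{\gamma}^{-1}(\bm{x})\bm{v}]$ and a corrector whose $i$-th component is $\int_0^t\frac{\partial(\gamma^{-1})_{ij}}{\partial x_l}(\bm{x}_s^m)\,(mv_j^m v_l^m)\,ds$.

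To control the boundary piece I would apply It\^o's formula to $|\bm{v}^m|^2$ and use the coercivity $(\bm{\gamma}\bm{y},\bm{y})\geq c_\lambda|\bm{y}|^2$ from Assumption~\ref{assume:bddcoeffs} to obtain $\sup_{t\in[0,T]}E|\bm{v}_t^m|^2\leq C/m$, so that $E|m\bm{v}_t^m|^2\to 0$; the uniform-in-$t$ upgrade $E\sup_t|m\bm{v}_t^m|^2\to 0$ follows from Doob/BDG on the martingale part of $|\bm{v}^m|^2$ together with the $\mathcal{O}(m)$ decorrelation time of the velocity, which costs at most a logarithmic amplification absorbed by the prefactor $m$.

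The technical heart of the proof is identifying the limit of the corrector. Applying It\^o's formula to $M^m_{jl}:=mv_j^m v_l^m$ and rearranging yields
\begin{equation*}
\bigl(\bm{\gamma}(\bm{x}^m)M^m+M^m\bm{\gamma}^*(\bm{x}^m)\bigr)_{jl}\,dt=(\bm{\sigma}\bm{\sigma}^*)_{jl}(\bm{x}^m)\,dt+m[v_j^m F_l+v_l^m F_j]\,dt+m[v_j^m\sigma_{lk}+v_l^m\sigma_{jk}]\,dW_k-m\,dM^m_{jl},
\end{equation*}
which integrated gives the approximate Lyapunov identity $\int_0^t(\bm{\gamma}M^m+M^m\bm{\gamma}^*-\bm{\sigma}\bm{\sigma}^*)(\bm{x}^m)\,ds=\mathcal{E}^m_t$, with $E\sup_t|\mathcal{E}^m_t|^2\to 0$ by the $\mathcal{O}(\sqrt{m})$ contributions from the drift and martingale plus the $\mathcal{O}(m)$ contribution from $mM^m_t$ (the latter requiring the fourth-moment analogue of the energy estimate). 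Assumption~\ref{assume:bddcoeffs} guarantees that $L_{\bm{x}}\colon A\mapsto\bm{\gamma}(\bm{x})A+A\bm{\gamma}^*(\bm{x})$ is invertible on symmetric matrices, with $\bm{J}(\bm{x})=L_{\bm{x}}^{-1}(\bm{\sigma}\bm{\sigma}^*(\bm{x}))$ continuous on the compact set $\mathcal{K}$ of Assumption~\ref{assume:stop}. Denoting by $\phi(\bm{x})$ the symmetric part of the matrix with $(l,j)$-entry $\frac{\partial(\gamma^{-1})_{ij}}{\partial x_l}(\bm{x})$ and solving $L_{\bm{x}}^*\psi(\bm{x})=\phi(\bm{x})$ for a continuous auxiliary matrix field $\psi$ on $\mathcal{K}$, the Frobenius adjointness $\langle\phi,L_{\bm{x}}A\rangle=\langle L_{\bm{x}}^*\psi,A\rangle$ recasts the corrector as
\begin{equation*}
\int_0^t\langle\phi(\bm{x}_s^m),\bm{J}(\bm{x}_s^m)\rangle\,ds+\int_0^t\langle\psi(\bm{x}_s^m),d\mathcal{E}^m_s\rangle.
\end{equation*}
The first integral converges in $L^2$ to $\int_0^t S_i(\bm{x}_s)\,ds$ by continuity of $\bm{J}$ and dominated convergence; the second is integrated by parts onto $\psi$, producing a boundary term of order $|\mathcal{E}^m_t|\to 0$ and a remainder pairing $\mathcal{E}^m$ with $\partial\psi(\bm{x}^m)\bm{v}^m$ that is handled by a two-scale estimate exploiting the fast equilibration of $\bm{v}^m$ relative to $\bm{x}^m$.

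Subtracting the target SDE~(\ref{eq:SKlimit}), using the Lipschitz bounds on $\bm{\gamma}^{-1}\bm{F}$ and $\bm{\gamma}^{-1}\bm{\sigma}$ available on $\mathcal{K}$, a BDG estimate for the stochastic-integral difference, and the Kurtz--Protter theorem to pass to the limit in $\int_0^t\bm{\gamma}^{-1}(\bm{x}_s^m)\bm{\sigma}(\bm{x}_s^m)\,d\bm{W}_s$, I would arrive at
\begin{equation*}
E\Bigl[\sup_{s\leq t}|\bm{x}_s^m-\bm{x}_s|^2\Bigr]\leq C\int_0^t E\Bigl[\sup_{r\leq s}|\bm{x}_r^m-\bm{x}_r|^2\Bigr]\,ds+\eta(m),
\end{equation*}
with $\eta(m)\to 0$, and close by Gronwall. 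The Lyapunov-corrector analysis is the main obstacle: $M^m$ does not solve the Lyapunov equation pointwise, only in integrated sense, so converting the approximate identity into convergence of an integral weighted by a varying test field is the technical core and the reason the Kurtz--Protter framework, together with the fast--slow averaging structure, is needed.
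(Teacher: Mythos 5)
Your overall strategy coincides with the paper's: invert $\bm{\gamma}$, integrate the $m\,d\bm{v}^m$ term by parts, identify the corrector $\int\partial_l(\gamma^{-1})_{ij}\,m v_j^m v_l^m\,ds$ through the Lyapunov operator $L_{\bm{x}}:A\mapsto\bm{\gamma}A+A\bm{\gamma}^*$, and invoke Kurtz--Protter. The algebraic identification of $\bm{S}$ via $\bm{J}=L_{\bm{x}}^{-1}(\bm{\sigma}\bm{\sigma}^*)$ is correct. However, the two places you defer to a ``two-scale'' or ``decorrelation'' argument are exactly the two hardest steps, and as written both have gaps.

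First, the uniform estimate $E[\sup_{t\le T}|m\bm{v}_t^m|^2]\to 0$ does not follow from Doob/BDG applied to $|m\bm{v}^m|^2$: the drift of $d|m\bm{v}_t^m|^2$ contains the positive term $\mathrm{Tr}(\bm{\sigma}\bm{\sigma}^*)\,dt$, which contributes $O(T)$, not $o(1)$, and BDG on the martingale part only adds $O(\sqrt{m})$. One must use the strong damping $-\tfrac{2c_\lambda}{m}|m\bm{v}|^2$ to cancel the $O(1)$ source; the paper does this with a specially constructed function $f_m$ satisfying $Af_m=1$ (following Blount--Kouritzin), applies the optional stopping theorem, and exploits $f_m(\epsilon)\to\infty$ while $f_m(m^2|\bm{v}|^2)\to 0$. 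Your ``logarithmic amplification absorbed by the prefactor $m$'' is the right intuition but is not a proof.

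Second, and more seriously, your treatment of the corrector error. After recasting the corrector as $\int\langle\phi,\bm{J}\rangle\,ds+\int\langle\psi(\bm{x}_s^m),d\mathcal{E}^m_s\rangle$ and integrating by parts, the remainder pairs $\partial\psi(\bm{x}^m)\bm{v}^m$ (of order $m^{-1/2}$ in $L^2$) with $\mathcal{E}^m$ (of order $m^{1/2}$ in $L^2$), so Cauchy--Schwarz yields only $O(1)$; the claimed vanishing rests entirely on the unproven two-scale estimate. The paper sidesteps this by inverting the Lyapunov operator at the level of differentials via $\bm{V}=-\int_0^\infty e^{\bm{A}y}\bm{C}e^{\bm{A}^*y}dy$, which splits the error into (i) integrals of bounded functions times $m\bm{v}^m$ against $ds$ and $d\bm{W}_s$, each $O(\sqrt{m})$ by the energy bound $E[m|\bm{v}_t^m|^2]\le C$ (Lemmas~3 and~4), and (ii) an integral against $d[(m\bm{v}^m)(m\bm{v}^m)^*]$. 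Crucially, term (ii) does \emph{not} vanish in total variation --- its bounded-variation part is only $O(1)$ --- so it cannot be folded into your remainder $\eta(m)\to 0$ and closed by Gronwall. It converges to zero only because the integrator $(m\bm{v}^m)(m\bm{v}^m)^*$ converges to zero uniformly while its total variation stays stochastically bounded, which is precisely Condition~1 of the Kurtz--Protter lemma. Your final paragraph mixes Gronwall with Kurtz--Protter in a way that obscures this: if every residual really vanished in $L^2(\sup)$ you would not need Kurtz--Protter at all, and since one residual does not, the Gronwall closure as stated fails.
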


Before proving the theorem, we state a lemma about convergence of stochastic integrals, which restates (in a slightly less general form) a theorem of Kurtz and Protter \cite{kurtz91}.

\subsection{Convergence of Stochastic Integrals}\label{sec:convergenceSI}

Let $\{\mathcal{F}_t\}_{t \geq 0}$ be a filtration on a probability space $(\Omega,\mathcal{F},P)$.  We assume that it  satisfies  the usual conditions \cite{revuz}. In our case, $\mathcal{F}_t$ will be (the usual augmentation of) $\sigma(\{\bm{W}_s:s\leq t\})$, the $\sigma$-algebra generated by a $k$-dimensional Wiener process $\bm{W}_t$ up to time $t$.  

Suppose $\bm{H}$ is an $\{\mathcal{F}_t\}$-adapted semi-martingale with paths in $C_{\mathbb{R}^n}[0,T]$, whose  Doob-Meyer decomposition is $\bm{H}_t = \bm{M}_t + \bm{A}_t$ so that $\bm{M}_t$ is an $\mathcal{F}_t$-local martingale and $\bm{A}_t$ is a process of locally bounded variation \cite{revuz}.  For a continuous $\{\mathcal{F}_t\}$-adapted process $\bm{X}$ with paths in $C_{\mathbb{R}^{d\times n}}[0,T]$ and for $t \leq T$ consider the It\^o integral 
\begin{equation}\label{eq:defint}
\int_0^t \bm{X}_s\,d\bm{H}_s = \lim \sum_{i}\bm{X}_{t_i}(\bm{H}_{t_{i+1}} - \bm{H}_{t_i}),
\end{equation}
where $\{t_i\}$ is a partition of $[0,t]$ and the limit is taken as the maximum of $t_{i+1}-t_i$ goes to zero. For a continuous processes $\bm{X}_s$ such that 
\begin{equation}
P\left ( \int_0^T |\bm{X}_s|^2 \,d\langle \bm{M} \rangle_s + \int_0^T |\bm{X}_s|\,dV_s(\bm{A}) <\infty\right ) = 1, 
\end{equation}
where $\langle \bm{M} \rangle_s$ is the quadratic variation of $\bm{M}_s$ and $V_s(\bm{A})$ is the total variation of $\bm{A}_s$, the limit in equation~(\ref{eq:defint}) exists in the sense that 
$$
\sup_{0 \leq t \leq T}\left (\left | \int_0^t \bm{X}_s\,d\bm{H}_s - \sum_{i}\bm{X}_{t_i}(\bm{H}_{t_{i+1}} - \bm{H}_{t_i})\right |\right )\rightarrow 0,
$$
in probability.  This (and related) convergence modes will be used throughout the paper \cite{protter,revuz}. 

Consider $(\bm{U}^m,\bm{H}^m)$ with paths in $C_{\mathbb{R}^d\times\mathbb{R}^n}[0,T]$ adapted to $\{\mathcal{F}_t\}$ where $\bm{H}^m_t$ is a semi-martingale with respect to $\mathcal{F}_t$. Let $\bm{H}^m_t = \bm{M}_t^m + \bm{A}_t^m$ be its Doob-Meyer decomposition.  Let $\bm{f}:\mathcal{U}\rightarrow \mathbb{R}^{d\times n}$ be a continuous {matrix-valued} function and let $\bm{X}^m$, with paths in $C_{\mathcal{U}}[0,T]$, satisfy the SDE 
\begin{equation}\label{eq:KPthm1}
\bm{X}_{t}^{m} = \bm{X}_0 + \bm{U}_{t}^m + \int_0^{t} \bm{f}(\bm{X}_s^{m})\,d\bm{H}_s^m,
\end{equation}
where $\bm{X}_0^m = \bm{X}_0 \in\mathbb{R}^d$ is the same initial condition for all $m$. Define $\bm{X}$, with paths in $C_{\mathcal{U}}[0,T]$, to be the solution of
\begin{equation}\label{eq:KPlimit}
\bm{X}_t = \bm{X}_0 + \int_0^tf(\bm{X}_s)\,d\bm{H}_s.
\end{equation}
Note that (\ref{eq:KPthm1}) implies $\bm{U}_0^m = \bm{0}$ for all $m$. 

\begin{lemma}\cite[Theorem 5.10]{kurtz91}\label{theorem:KP} 
Assume $(\bm{U}^m,\bm{H}^m)\rightarrow (\bm{0},\bm{H})$ in probability with respect to $C_{\mathbb{R}^d\times\mathbb{R}^n}[0,T]$, i.e. for all $\epsilon>0$, 
\begin{equation}\label{eq:defofprob}
P\left [ \sup_{0\leq s \leq T} \left( |\bm{U}_s^m|+|\bm{H}_s^m-\bm{H}_s| \right) >\epsilon\right ]\rightarrow 0,
\end{equation}
as $m\rightarrow 0$, and the following condition is satisfied: 
\begin{condition}\label{condition}[Tightness condition]
The total variations, $\{V_t(\bm{A}^m)\}$, are stochastically bounded for each $t>0$, i.e. $P[V_t(\bm{A}^m)>L]\rightarrow 0$ as $L\rightarrow\infty$, uniformly in $m$.
\end{condition}
Suppose that there exists a unique global solution to equation~(\ref{eq:KPlimit}). Then, as $m\rightarrow0$, $\bm{X}^m$ converges to $\bm{X}$, the solution of equation~(\ref{eq:KPlimit}), in probability with respect to $C_{\mathcal{U}}([0,T])$.
\end{lemma}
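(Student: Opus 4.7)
The plan is to recast equation~\eqref{eq:SDEgeneral} as an equation of the form~\eqref{eq:KPthm1}, apply Lemma~\ref{theorem:KP} to obtain convergence in probability in $C_\mathcal{U}([0,T])$, and finally upgrade to $L^2$ using the confinement in Assumption~\ref{assume:stop}. Algebraically solving the $\bm v$-equation for $\bm v^m_t\,dt = d\bm x^m_t$ and integrating gives
\begin{equation*}
\bm x^m_t = \bm x + \int_0^t \bm\gamma^{-1}(\bm x^m_s)\bm F(\bm x^m_s)\,ds + \int_0^t \bm\gamma^{-1}(\bm x^m_s)\bm\sigma(\bm x^m_s)\,d\bm W_s - m\int_0^t \bm\gamma^{-1}(\bm x^m_s)\,d\bm v^m_s.
\end{equation*}
Because $\bm x^m$ has absolutely continuous paths, no cross-variation term appears in the product rule for $(\gamma^{-1})_{ij}(\bm x^m_t)\,v^m_{t,j}$; integration by parts turns the last integral into a boundary piece $-m[\bm\gamma^{-1}(\bm x^m_t)\bm v^m_t-\bm\gamma^{-1}(\bm x)\bm v]$ plus an interior piece whose $i$th component is
\begin{equation*}
I^m_{t,i} := \int_0^t \partial_l(\gamma^{-1})_{ij}(\bm x^m_s)\,\bigl[m\,v^m_{s,j}v^m_{s,l}\bigr]\,ds.
\end{equation*}
Setting $\bar{\bm H}_s := (s,\bm W_s)$ and $\bar{\bm f}(\bm x) := \bigl[\bm\gamma^{-1}\bm F+\bm S \,\big|\, \bm\gamma^{-1}\bm\sigma\bigr](\bm x)$, the identity becomes $\bm x^m_t = \bm x + \bm U^m_t + \int_0^t \bar{\bm f}(\bm x^m_s)\,d\bar{\bm H}_s$, where $\bm U^m_t$ collects the boundary piece together with $\bm I^m_t - \int_0^t \bm S(\bm x^m_s)\,ds$. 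The driver $\bar{\bm H}$ does not depend on $m$ and its bounded-variation component has deterministic total variation $T$, so Condition~\ref{condition} is automatic; Assumption~\ref{assume:existence} supplies uniqueness of~\eqref{eq:SKlimit}. Thus it suffices to prove $\sup_{t\le T}|\bm U^m_t|\to 0$ in probability.

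\textbf{Boundary piece.} A standard energy estimate — apply It\^o's formula to $|\bm v^m_t|^2$, use the coercivity $(\bm\gamma\bm y,\bm y)\ge c_\lambda|\bm y|^2$ from Assumption~\ref{assume:bddcoeffs} and the uniform coefficient bounds from Assumption~\ref{assume:stop}, and finish with Doob's maximal inequality — yields $E\bigl[\sup_{t\le T} m^2|\bm v^m_t|^2\bigr]\to 0$, so the boundary piece vanishes uniformly in $L^2$.

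\textbf{The main obstacle: the interior piece.} The crux is showing $\bm I^m_t \to \int_0^t \bm S(\bm x^m_s)\,ds$ uniformly in probability, i.e.\ that the fast tensor $K^m_{jl}(s):=m\,v^m_{s,j}v^m_{s,l}$ averages to the Lyapunov solution $J_{jl}(\bm x^m_s)$ against smooth weights. Applying It\^o's formula to $m v^m_j v^m_l$, the $O(1/m)$ drift from $\bm\gamma\bm v^m/m$ balances against the $O(1/m)$ quadratic variation $(\bm\sigma\bm\sigma^*)_{jl}/m$, leaving
\begin{equation*}
\bigl[\bm\gamma(\bm x^m_s)K^m(s)+K^m(s)\bm\gamma^*(\bm x^m_s)\bigr]_{jl}\,ds = (\bm\sigma\bm\sigma^*)_{jl}(\bm x^m_s)\,ds + m\,(\text{drift increment})_{jl} + m\,(\text{mtg increment})_{jl} - m^2\,d(v^m_j v^m_l).
\end{equation*}
The left-hand side equals $\mathcal{L}[K^m-\bm J(\bm x^m_s)]_{jl}\,ds$, where $\mathcal{L}[\bm M]:=\bm\gamma\bm M+\bm M\bm\gamma^*$. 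Assumption~\ref{assume:bddcoeffs} places every eigenvalue of $\bm\gamma$ in the half-plane $\{\mathrm{Re}\,z\ge c_\lambda\}$, so the eigenvalues of $\mathcal{L}$ are bounded below by $2c_\lambda$; hence $\mathcal{L}$ is boundedly invertible uniformly on $\mathcal{K}$. Contracting the identity with $\partial_l(\gamma^{-1})_{ij}(\bm x^m_s)$ (equivalently, pairing it with $\mathcal{L}^{-1\,*}$ applied to that coefficient), integrating in $s$, and estimating the $m^2 d(v_j v_l)$ term by an integration by parts plus the energy bound above, and the martingale increment by Burkholder–Davis–Gundy plus the energy bound, one concludes that $\sup_{t\le T}\bigl|\bm I^m_t-\int_0^t \bm S(\bm x^m_s)\,ds\bigr|\to 0$ in probability. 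This is the main obstacle, and it hinges on the spectral assumption providing uniform invertibility of $\mathcal{L}$.

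\textbf{Upgrade to $L^2$.} Lemma~\ref{theorem:KP} now yields $\bm x^m\to\bm x$ in probability in $C_\mathcal{U}([0,T])$. By Assumption~\ref{assume:stop} each $\bm x^m_t\in\mathcal{K}$ for all $t\in[0,T]$; by closedness of $\mathcal{K}$ and probability convergence, $\bm x_t\in\mathcal{K}$ a.s., so $\sup_{t\le T}|\bm x^m_t-\bm x_t|^2 \le (2\,\mathrm{diam}\,\mathcal{K})^2$ is bounded by a deterministic constant. Bounded convergence then promotes the probability convergence to the stated $L^2$ limit.
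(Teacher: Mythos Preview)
First, a mismatch: the statement you were given is Lemma~\ref{theorem:KP}, which the paper does not prove---it is quoted from \cite{kurtz91}. What you wrote is a proof sketch of Theorem~\ref{theorem}, so I compare it with the paper's proof of that.

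Your skeleton matches the paper's: solve the $\bm v$-equation for $d\bm x^m_t$, integrate $m\bm\gamma^{-1}\,d\bm v^m$ by parts, apply It\^o's formula to $m\bm v^m(m\bm v^m)^*$ and read off the Lyapunov structure, invoke Lemma~\ref{theorem:KP}, then upgrade to $L^2$ via confinement in $\mathcal K$. The boundary piece and the $L^2$ upgrade are fine. The substantive divergence is in where you place the differential $d[(m\bm v^m)(m\bm v^m)^*]$. You keep a fixed driver $\bar{\bm H}_s=(s,\bm W_s)$ and push \emph{everything}, including this differential, into $\bm U^m$; that makes Condition~\ref{condition} trivial but forces you to show
\[
\sup_{t\le T}\Bigl|\int_0^t g_{jl}(\bm x^m_s)\,d\bigl[(m v^m_s)_j(m v^m_s)_l\bigr]\Bigr|\to 0,
\]
with $g=\mathcal L^{-1*}[\partial(\bm\gamma^{-1})]$. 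The integrator does tend to zero, but its bounded-variation part (see equation~\eqref{eq:dmvmv}) contains $\bm\gamma\,m\bm v^m(\bm v^m)^*\,dt$, whose total variation over $[0,T]$ is $O(1)$, so the integral does not vanish automatically. Your remedy, ``integration by parts plus the energy bound'', produces the interior term
\[
\int_0^t (m v^m_s)_j(m v^m_s)_l\,\partial_p g_{jl}(\bm x^m_s)\,v^m_{s,p}\,ds,
\]
which (i) requires $g\in C^1$, hence $\bm\gamma\in C^2$, whereas Assumption~\ref{assume:bddcoeffs} grants only $C^1$; and (ii) is cubic in $\bm v^m$, of size $m^2|\bm v^m|^3$. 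Your stated energy bound $E[\sup_t m^2|\bm v^m_t|^2]\to 0$ controls quadratic quantities, not cubic ones; you would need something like $\sup_s E[m^{3/2}|\bm v^m_s|^3]\le C$, which you neither state nor prove. This is a genuine gap.

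The paper avoids both problems by enlarging $\bm H^m$ instead of $\bm U^m$: it appends the $d^2$ entries of $(m\bm v^m)(m\bm v^m)^*$ as extra rows of $\bm H^m\in\mathbb R^{1+k+1+d^2}$ and puts the (merely continuous) weight $g$ into $\bm f$. Then Lemma~\ref{theorem:KP} does the work: $\bm H^m\to\bm H$ follows from $m\bm v^m\to 0$ (Lemma~\ref{lemma:convergenceVas}), and Condition~\ref{condition} needs only the \emph{quadratic} pointwise bound $E[m|\bm v^m_t|^2]\le C$ (Lemma~\ref{lemma:convergenceVpointwise}). This is precisely the leverage the Kurtz--Protter machinery provides, and it is why the paper can get by with $C^1$ coefficients and second-moment estimates while your route, as written, cannot.
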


To cast the limiting equation in the form of Lemma 1, it would be enough to rewrite equation~(\ref{eq:SDEgeneral}) and check that Condition~\ref{condition} is satisfied.  In our case, the limiting equation is 
\begin{equation}
d\bm{x}_t=\left[ \bm{\gamma}^{-1}(\bm{x}_t)\bm{F}(\bm{x}_t)+\bm{S}(\bm{x}_t) \right] dt + \bm{\gamma}^{-1}(\bm{x}_t)\bm{\sigma}(\bm{x}_t)d\bm{W}_t, \quad \bm{x}_0 = \bm{x}.
\end{equation}
To state the limiting equation, it would be enough to define
$$\bm{f}(\bm{x}) = (\bm{\gamma}^{-1}(\bm{x})\bm{F}(\bm{x}),  \bm{\gamma}^{-1}(\bm{x})\bm{\sigma}(\bm{x}), \bm{S}(\bm{x})).$$
However, to describe the equations with $m >0$ using the same function $\bm{f}$, we need it to have more components. In the limit $m \to 0$ these additional components will be integrated against zero processes and thus will not contribute to the stochastic integral. That is, we will apply Lemma~\ref{theorem:KP}, with $\bm{f}$ of the form 
\begin{equation}
\bm{f}(\bm{x}) = (\bm{\gamma}^{-1}(\bm{x})\bm{F}(\bm{x}),  \bm{\gamma}^{-1}(\bm{x})\bm{\sigma}(\bm{x}), \bm{S}(\bm{x}), ... ),
\end{equation}
where $\bm{f}$ contains more components and the limit process $\bm{H}_t$ has zeros in the corresponding rows, i.e. 
\begin{equation}
\bm{H}_t = \begin{pmatrix} t \\ \bm{W}_t \\ t \\ 0 \\  \vdots \\ 0 \end{pmatrix}. 
\end{equation}
Note also that $\bm{H}_t$ has two components equal $t$ to separate the noise-induced drift $\bm{S}$ from the term $\bm{\gamma}^{-1}\bm{F}$.

\begin{proof}[of Theorem~\ref{theorem}]

We first state and prove a lemma about the convergence of the processes $m\bm{v}^m$ to zero.

\begin{lemma}\label{lemma:convergenceVas}
For each $m>0$, let $\bm{x}_t^m$ be any $\mathcal{F}_t$-adapted process with continuous paths in $\mathcal{K}$  and define $\bm{v}_t^m$ as the solution to the SDE given by the second equation in~(\ref{eq:SDEgeneral}) with functions $\bm{F}$, $\bm{\gamma}$, and $\bm{\sigma}$ satisfying Assumptions~\ref{assume:bddcoeffs}-\ref{assume:stop}. Then, for any $p \geq 1$, $m\bm{v}^m\rightarrow 0$ as $m\rightarrow 0$ in $L^p$ with respect to $C_{\mathbb{R}^d}[0,T]$, and hence also in probability with respect to $C_{\mathbb{R}^d}[0,T]$, i.e. 
\begin{equation}
\label{lemma2assertion}
\lim_{m\rightarrow 0} E\left[ \left (\sup_{0\leq t \leq T} |m\bm{v}_t^m| \right )^p \right] =0.
\end{equation}
and, for all $\epsilon>0$,
\begin{equation}
\lim_{m\rightarrow 0} P\left (\sup_{0\leq t \leq T} |m\bm{v}_t^m|>\epsilon \right )=0.
\end{equation}
\end{lemma}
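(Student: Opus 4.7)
My plan is to introduce the scaled process $\bm{u}_t^m := m\bm{v}_t^m$, which satisfies
$$d\bm{u}_t^m = \Bigl[\bm{F}(\bm{x}_t^m) - \tfrac{1}{m}\bm{\gamma}(\bm{x}_t^m)\bm{u}_t^m\Bigr]\,dt + \bm{\sigma}(\bm{x}_t^m)\,d\bm{W}_t, \qquad \bm{u}_0^m = m\bm{v}.$$
By Assumption~\ref{assume:bddcoeffs}, the linear drift exhibits friction of rate at least $c_\lambda/m$, which heuristically pins $|\bm{u}_t^m|$ to an equilibrium scale of order $\sqrt{m}$ and its pathwise supremum on $[0,T]$ to order $\sqrt{m\log(T/m)}$, making the asserted convergence entirely plausible. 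To make it quantitative, I will apply It\^o's formula to $|\bm{u}_t^m|^{2n}$ for an integer $n\ge 2$, chosen large enough that the second-order It\^o terms can be absorbed into the friction at the price of a residual that vanishes with $m$.

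Concretely, combining the coercivity $(\bm{y},\bm{\gamma}\bm{y})\ge c_\lambda|\bm{y}|^2$ of Assumption~\ref{assume:bddcoeffs} with the uniform bounds $|\bm{F}|, |\bm{\gamma}|, |\bm{\sigma}|\le C_T$ of Assumption~\ref{assume:stop}, and applying a weighted Young's inequality to absorb the terms of order $|\bm{u}|^{2n-1}$ (from the force $\bm{F}$) and $|\bm{u}|^{2n-2}$ (from $\mathrm{tr}(\bm{\sigma}\bm{\sigma}^*)$ together with the Hessian contribution $(\bm{u},\bm{\sigma}\bm{\sigma}^*\bm{u})$) into half of the friction $\tfrac{nc_\lambda}{m}|\bm{u}|^{2n}$, It\^o's formula yields the differential inequality
$$d|\bm{u}_t^m|^{2n} \le -\tfrac{nc_\lambda}{m}|\bm{u}_t^m|^{2n}\,dt + K_n\,m^{n-1}\,dt + dM_t^{(n)},$$
where $M_t^{(n)} = 2n\int_0^t|\bm{u}_s^m|^{2n-2}\bigl(\bm{u}_s^m,\bm{\sigma}(\bm{x}_s^m)\,d\bm{W}_s\bigr)$ is a local martingale. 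Taking expectations, Gronwall delivers the uniform moment bound $\sup_{t\le T}E[|\bm{u}_t^m|^{2n}]\le C_n m^n$.

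For the supremum in time, I integrate the differential inequality, discard the nonnegative time-integrated friction, and take $\sup_{t\le T}$ before expectation to obtain
$$E\bigl[\sup_{t\le T}|\bm{u}_t^m|^{2n}\bigr] \le m^{2n}|\bm{v}|^{2n} + K_n T\,m^{n-1} + E\bigl[\sup_{t\le T}|M_t^{(n)}|\bigr].$$
The martingale term is controlled by Burkholder--Davis--Gundy, $E[\sup_t|M_t^{(n)}|]\le C\,E\bigl[(\int_0^T|\bm{u}_s^m|^{4n-2}\,ds)^{1/2}\bigr]$; combining with Jensen and the uniform moment bound rerun at the larger exponent $4n-2$ (which gives $E[|\bm{u}_s^m|^{4n-2}]=O(m^{2n-1})$), this is $O(m^{n-1/2})$. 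Altogether $E[\sup_{t\le T}|\bm{u}_t^m|^{2n}] = O(m^{n-1})$, which vanishes as $m\to 0$ for any $n\ge 2$. For arbitrary $p\ge 1$, I pick $n\ge 2$ with $2n\ge p$; Jensen gives $E[(\sup_t|\bm{u}_t^m|)^p]\le E[\sup_t|\bm{u}_t^m|^{2n}]^{p/(2n)}\to 0$, and the convergence in probability follows by Markov's inequality.

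The main obstacle I anticipate is the second-order It\^o correction arising from $\bm{\sigma}\bm{\sigma}^*$: it enters at order $|\bm{u}|^{2n-2}$ rather than $|\bm{u}|^{2n}$, so absorbing it into the friction via Young's inequality leaves an unavoidable residual of order $m^{n-1}$. At $n=1$ this residual is $O(1)$ and does not vanish, which is precisely why one cannot close the estimate directly on $|\bm{u}|^2$ and must work at $n\ge 2$, recovering general $p$ by Jensen afterwards. A related bookkeeping point is that BDG forces the marginal moment bound to be bootstrapped to the higher exponent $4n-2$, but this is simply the same estimate rerun with a larger integer.
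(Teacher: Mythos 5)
Your proof is correct, but it follows a genuinely different route from the paper's. The paper adapts a comparison-function argument of Blount: it constructs an explicit function $f_m$ satisfying $Af_m=1$ for an auxiliary one-dimensional operator $A$ dominating the generator acting on $|m\bm{v}^m_t|^2$, applies It\^o plus optional stopping to get the tail bound $P(\sup_t|m\bm{v}^m_t|^2\ge\epsilon)\le (f_m(|m\bm{v}|^2)+T)/f_m(\epsilon)$, deduces convergence in probability of the supremum from the asymptotics $f_m(\epsilon)\to\infty$, $f_m(m^2|\bm{v}|^2)\to 0$, and then uses the exponential lower bound on $f_m$ to obtain uniform integrability of all powers of the supremum, whence $L^p$ convergence. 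You instead run standard $2n$-th moment estimates (It\^o, coercivity, weighted Young, Gronwall), control the running supremum via Burkholder--Davis--Gundy with a bootstrapped marginal bound at exponent $4n-2$, and interpolate down to general $p$ by Jensen. Your exponent bookkeeping is right: the It\^o correction forces a residual $O(m^{n-1})$, so $n\ge 2$ is genuinely needed, and $4n-2=2(2n-1)$ with $2n-1\ge 3$ keeps the bootstrap within the same family of estimates. What each approach buys: yours is more elementary and avoids inventing the special function $f_m$, at the price of giving only polynomial rates $E[\sup_t|m\bm{v}^m_t|]=O(m^{(n-1)/(2n)})$; the paper's comparison function yields an essentially Gaussian tail in $\epsilon/m$, hence uniform integrability of \emph{all} moments in one stroke and a rate arbitrarily close to $O(\sqrt{m})$. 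One point you should make explicit: before taking expectations and applying Gronwall, the stochastic integral $M^{(n)}$ is only a local martingale (the velocity is not a priori bounded even though $\bm{x}^m_t$ stays in $\mathcal{K}$), so you need to localize with stopping times $\tau_N=\inf\{t:|\bm{u}^m_t|\ge N\}$ and pass to the limit by Fatou; this is routine and does not affect the conclusion.
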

\begin{proof}
Consider the SDE for $m\bm{v}_t^m$ given by equation~(\ref{eq:SDEgeneral}), 
\begin{equation}
	d(m\bm{v}_t^m) = \bm{F}(\bm{x}_t^m)\;dt -\frac{\bm{\gamma}(\bm{x}_t^m)}{m}(m\bm{v}_t^m)\;dt + \bm{\sigma}(\bm{x}_t^m)\;
d\bm{W}_t. 
\end{equation}
This equation is similar to an Ornstein-Uhlenbeck equation, which we would obtain with $\bm{F} = 0$ and $\bm{\gamma}$ and $\bm{\sigma}$ constant.  Thus we use this similarity to bound $m\bm{v} ^m$.  We use a technique similar to the proof of Lemma 3.19 in 
\cite{blount1991}.  We first define the function 
\begin{equation}
	f_m(u) = \frac{2m}{c_{\lambda}} \int_0^{\sqrt{c_{\lambda} u/(2 m \Gamma )}}e^{s^2/2}\int_0^s e^{-r^2/2}\;dr\;ds,
\end{equation}
where $\Gamma = C_T^2d$ ($C_T$ is the bound from Assumption~3 and $d$ is the dimension of $\bm{v}_t^m$, i.e. the dimension of the space). 
Note that $f_m(0) = 0$, $f'_m(u),f''_m(u)>0$ for all $u\in[0,\infty)$.  Also, $f_m(u)\rightarrow\infty$ and $f_m(m^2u)\rightarrow0$ as $m\rightarrow0$ for all $u>0$.  Furthermore, 
\begin{equation}
\label{eq:Aidentity}
Af_m(u) = 1
\end{equation}
 for all $u\in[0,\infty)$,
where $A$ is the differential operator defined by
\begin{equation}
\label{eq:Aequation}
Af_m(u) \equiv f'_m(u)(-\frac{c_\lambda}{m}u+ 2 \Gamma) + 4 \Gamma u f''_m(u)
\end{equation}
  We will prove that
\begin{equation}
	P\left (\sup_{0\leq t \leq T}|m\bm{v}_t^m|^2\geq \epsilon \right ) \leq 
	\frac{f_m\left (|m\bm{v}|^2\right ) +T}{f_m(\epsilon)}\rightarrow 0, 
\end{equation}
as $m\rightarrow 0$.  Using the It\^o product formula for $|m\bm{v}^m_t|^2 = m(\bm{v}_t^m)^*m\bm{v}^m_t$, we obtain 
\begin{align}
	d(m(\bm{v}_t^m)^*m\bm{v}_t^m) =& m(\bm{v}_t^m)^*d(m\bm{v}_t^m) + d(m\bm{v}_t^m)^*m\bm{v}_t^m + 
	d(m\bm{v}_t^m)^*d(m\bm{v}_t^m) \\
\label{eq:mv2diff}
	=&-\frac{2}{m}(\bm{\gamma}(\bm{x}_t^m)m\bm{v}_t^m,m\bm{v}_t^m)\;dt \\
	+& Tr(\bm{\sigma} (\bm{x}_t^m) \bm{\sigma} ^* (\bm{x}_t^m))\;dt + m(\bm{v}_t^m)^* \bm{F}(\bm{x}_t^m)dt +  \bm{F}(\bm{x}_t^m)^*m\bm{v}_t^mdt \nonumber \\
	+& m(\bm{v}_t^m)^* (\bm{\sigma}(\bm{x}_t^m)\;d\bm{W}_t) + (\bm{\sigma}(\bm{x}_t^m)\;d\bm{W}_t)^* m\bm{v}_t^m. \nonumber
\end{align}

  By the It\^o formula for all $t\in[0,T]$, 
\begin{align}
	f_m\left(|m\bm{v}_t^m|^2\right ) = &f_m\left (|m\bm{v}|^2\right ) \\
	+& \int_0^t \left [f'_m\left (|m\bm{v}_s^m|^2 \right)\Big (-\frac{2}{m}(\bm{\gamma}(\bm{x}_s^m)m\bm{v}_s^m,m\bm{v}_s^m) \right .\nonumber\\
	+& \left .m(\bm{v}_s^m)^* \bm{F}(\bm{x}_s^m) +  \bm{F}(\bm{x}_s^m)^*m\bm{v}_s^m + Tr(\bm{\sigma} (\bm{x}_s^m) \bm{\sigma} ^* (\bm{x}_s^m)) \Big) \right .\nonumber\\
	 +& 2 f''_m\left (|m\bm{v}_s^m|^2 \right) |m\bm{\sigma} ^* (\bm{x}_s^m) \bm{v}^m_s|^2 \Big ] \;ds
	 + M_t \nonumber
\end{align}
where $M_t \in C_{\mathbb{R}} [0,T]$ is a martingale with $E[M_t]=0$. Next we use the bound, 
\begin{align}
	(m\bm{v}_t^m,\bm{F}(\bm{x}_t^m)) \leq& \frac{1}{2}|m\bm{v}_t^m|^2 + \frac{1}{2}|\bm{F}(\bm{x}_t^m)|^2
\end{align}
 and from Assumption~1
\begin{equation}
	(\bm{\gamma}(\bm{x}_t^m)m\bm{v}_t^m,m\bm{v}_t^m) \geq c_{\lambda}|m\bm{v}_t^m|^2 . 
\end{equation}
Using $f' _m (u), f'' _m (u)>0$ for all $u\in[0,\infty)$, we obtain
\begin{align}
f_m\left(|m\bm{v}_t^m|^2\right) \leq & f_m\left (|m\bm{v}|^2\right )+\int_0^t \Big[f'_m\left(|m\bm{v}_s^m|^2\right )\Big(-\frac{2c_\lambda}{m}|m\bm{v}_s^m|^2  \\
+& |m\bm{v}_s^m|^2 + |\bm{F}(\bm{x}_s^m)|^2 + Tr(\bm{\sigma} (\bm{x}_s^m) \bm{\sigma} ^* (\bm{x}_s^m)) \Big)\nonumber \\
+& 2 f''_m\left(|m\bm{v}_s^m|^2\right)|m\bm{v}_s^m|^2 | \bm{\sigma} (\bm{x}_s^m) |^2 \Big ] \;ds
	 + M_t . \nonumber 
\end{align}
For small $m>0$, the first term under the integral will dominate the second. 
More precisely, for  $\bm{x}_s^m$ in the compact set $\mathcal{K}$ and for $m$ sufficiently small so that $\frac{c_\lambda}{m} > 1$, we have
\begin{align}
	f_m\left (|m\bm{v}_t^m|^2\right ) \leq &  f_m\left (|m\bm{v}|^2\right ) \\
	+&\int_0^t [f'_m\left (|m\bm{v}_s^m|^2\right )\Big(-\frac{c_\lambda}{m}|m\bm{v}_s^m|^2 +C_T ^2 + C_{T}^2d\Big)\nonumber \\
+& 2 f''_m\left (|m\bm{v}_s^m|^2\right )|m\bm{v}_s^m|^2 C_T^2\Big ] \;ds
	 + M_t . \nonumber
\end{align}
Using the definition of $\Gamma$ and equations ~(\ref{eq:Aequation}) and ~(\ref{eq:Aidentity}) we get
\begin{align}
	f_m(|m\bm{v}_t^m|^2) \leq & f_m\left (|m\bm{v}|^2\right )+\int_0^t [f'_m(|m\bm{v}_s^m|^2)(-\frac{c_\lambda}{m}|m\bm{v}_s^m|^2 +2 \Gamma) \\
+&  4 \Gamma |m\bm{v}_s^m|^2 f''_m(|m\bm{v}_s^m|^2) \Big ] \;ds
	 + M_t\nonumber\\
	 = & f_m\left (|m\bm{v}|^2\right )+\int_0^t Af_m(|m\bm{v}_s^m| ^2)\;ds + M_t \\
	 = &f_m\left (|m\bm{v}|^2\right )+ t + M_t.
\end{align}
Define $\tau_\epsilon^m=\inf\{t: |m\bm{v}_t^m|^2=\epsilon\}$. Then for all 
$\epsilon>0$, 
\begin{equation}
	P\left ( \sup_{0\leq t\leq T} |m\bm{v}_t^m|^2\geq\epsilon \right ) = 
	P\left ( |m\bm{v}_{T\wedge \tau_\epsilon^m}^m|^2\geq\epsilon\right ). 
\end{equation} 
Next, because $f_m$ is an increasing function (since $f' _m (u)>0$ for all $u\geq0$), 
\begin{equation}
	P\left ( |m\bm{v}_{T\wedge \tau_\epsilon^m}^m|^2\geq\epsilon\right ) =
	P\left ( f_m( |m\bm{v}_{T\wedge \tau_\epsilon^m}^m|^2)\geq f_m(\epsilon)\right ) 
\end{equation}
Finally we use Chebyshev's inequality and the Optional Stopping Theorem to obtain, 
\begin{align}
P\left ( \sup_{0\leq t\leq T} |m\bm{v}_t^m|^2\geq\epsilon \right ) \leq& \frac{E[f_m(|m\bm{v}_{T\wedge \tau_\epsilon^m}^m|^2)]}{f_m(\epsilon)} \leq \frac{E[f_m\left (|m\bm{v}|^2\right )+T\wedge \tau_\epsilon^m]}{f_m(\epsilon)}\\
\leq& \frac{f_m\left (|m\bm{v}|^2\right ) +T}{f_m(\epsilon)}. 
\end{align}
 Recalling that $f_m \left (m^2|\bm{v}|^2\right ) \rightarrow 0$ and $f_m(\epsilon) \rightarrow \infty$ as $m \rightarrow 0$, this inequality proves that as $m\rightarrow 0$, $ \sup_{0 \leq t \leq T} | m\bm{v}_t^m|^2 \rightarrow 0$ in probability, i.e.,
for all $\epsilon>0$,
\begin{equation}
\label{eq:convergenceinprob}
\lim_{m\rightarrow 0} P \left (\sup_{0\leq t \leq T} |m\bm{v}_t^m| ^2 > \epsilon \right ) =0.
\end{equation}
We prove that $m\bm{v} ^m$ converges to zero in $L^p$ with respect to $C_{\mathbb{R}^d}[0,T]$.  Let $q > 1$, then
\begin{align*}
E\left[ \left (\sup_{0\leq t \leq T} |m\bm{v}_t^m|^2 \right )^q \right] &= \int _0 ^{\infty} q x ^{q - 1} P\left ( \sup_{0\leq t\leq T} |m\bm{v}_t^m|^2 \geq x \right ) dx \\
&\leq \int _0 ^{\infty} q x ^{q - 1} \frac{f_m\left (|m\bm{v}|^2\right ) +T}{f_m(x)} dx \\
&\leq q(1 + T) \int _0 ^{\infty}   \frac{x ^{q - 1}}{f_m(x)} dx 
\end{align*}
for $m$ sufficiently small since $f_m \left (|m \bm{v}|^2\right ) \rightarrow 0$ as $m \rightarrow 0$.  Since
\begin{align*}
f_m(x) &= \frac{2m}{c_{\lambda}} \int_0^{\sqrt{c_{\lambda} x/(2 m \Gamma )}}e^{s^2/2}\int_0^s e^{-r^2/2}\;dr\;ds \\
&\geq \frac{2m}{c_{\lambda}} \int_0^{\sqrt{c_{\lambda} x/(2 m \Gamma )}}e^{s^2/2} \left( \frac{s}{2} \right) e^{-s^2/8} \;ds \\
&= \frac{1}{4 \Gamma} \int_0^x e^{\frac{3 c_{\lambda} u}{16 m \Gamma}} du \; \geq  \; \frac{1}{4 \Gamma} \left( \frac{x}{2} \right) e^{\frac{3 c_{\lambda} x}{32 m \Gamma}}
\end{align*} 
it follows that
\begin{equation*}
E\left[ \left (\sup_{0\leq t \leq T} |m\bm{v}_t^m|^2 \right )^q \right] \leq C(q) < \infty
\end{equation*}
where $C(q)$ depends on $q$ but is independent of $m$.  Thus, there exists $m_0 > 0$ such that the family $\{ \sup_{0\leq t \leq T} |m\bm{v}_t^m| ^p \; : \; 0 < m \leq m_0 \}$ is uniformly integrable for $ 1 \leq p < 2q$ \cite[13.3]{williams}.  This fact together with (\ref{eq:convergenceinprob}) implies (\ref{lemma2assertion})\cite[13.7]{williams}.  Q.E.D.
\end{proof}

To determine the limit of SDE~(\ref{eq:SDEgeneral}) as $m\rightarrow 0$, we rewrite the equation for $\bm{v}_t^m$ as 
\begin{equation}
	\bm{\gamma}(\bm{x}^m_t)\bm{v}^m_t \,dt = \bm{F}(\bm{x}^m_t)\,dt + \bm{\sigma}(\bm{x}^m_t)d\bm{W}_t - md\bm{v}^m_t. 
\end{equation}
By Assumption~\ref{assume:bddcoeffs}, $\bm{\gamma}(\bm{x})$ is invertible, thus
\begin{equation}\label{eq:vdt}
d\bm{x}_t^m = \bm{v}_t^m\,dt = \bm{\gamma}^{-1}(\bm{x}^m_t) \bm{F}(\bm{x}^m_t)\,dt +
\bm{\gamma}^{-1}(\bm{x}^m_t) {\bm{\sigma}}(\bm{x}^m_t)d\bm{W}_t - m\bm{\gamma}^{-1}(\bm{x}^m_t)\,d\bm{v}^m_t,
\end{equation}
or, in integral form, 
\begin{equation}\label{eq:sub}
\bm{x}_t^m =\bm{x} + \int_0^t\bm{\gamma}^{-1}(\bm{x}^m_s) \bm{F}(\bm{x}^m_s)\,ds +\int_0^t\bm{\gamma}^{-1}(\bm{x}^m_s){\bm{\sigma}}(\bm{x}^m_s)
d\bm{W}_s - \int_0^t m\bm{\gamma}^{-1}(\bm{x}^m_s)\,d\bm{v}^m_s.
\end{equation}
In order to apply Lemma~\ref{theorem:KP} we need to integrate the last term by parts (see Remark~\ref{remark2}).

\subsection{Integration by parts to satisfy assumptions of Lemma~\ref{theorem:KP}} To determine the limit of the expression~(\ref{eq:sub}) as $m\rightarrow 0$, we consider its $i$th component. Integrating by parts the last term on the right-hand side of equation~(\ref{eq:sub}) we obtain, noting that $\bm{v}_0^m = \bm{v}$, 
\begin{align}\label{eq:subvvstar}
\int_0^t m\left[ (\gamma^{-1})_{ij}(\bm{x}^m_s)\right]\,d(v^m_s)_j =& (\gamma^{-1})_{ij}(\bm{x}^m_t)m({v}_t^m)_j - (\gamma^{-1})_{ij}(\bm{x})m{v}_j \\
-& \int_0^t \frac{\partial}{\partial x_{l}}[(\gamma^{-1})_{ij}(\bm{x}^m_s)]m({v}^m_s)_jd({x}^m_s)_{l}.\nonumber
\end{align}
Since $d(x^m_s)_{{l}} = (v^m_s)_{{l}}\,ds$, the last integral can be rewritten as
\begin{equation}
\int_0^t \frac{\partial}{\partial x_{l}}[(\gamma^{-1})_{ij}(\bm{x}^m_s)]m({v}^m_s)_j (v^m_s)_{{l}}\,ds.
\end{equation}
Note that $\bm{x}_t^m$ has bounded variation, hence the It\^o term in the integration by parts formula is zero. The product $m({v}^m_s)_j (v^m_s)_{{l}}$ in the above integral is the $(j,{l})$-entry of the (outer product) matrix $m\bm{v}^m_s(\bm{v}^m_s)^*$.  We will express this matrix as a solution of an equation.  To this end, we calculate, using the It\^o product formula,
\begin{equation}
d[m\bm{v}^m_s(m\bm{v}^m_s)^*] = \,d(m\bm{v}^m_s)(m\bm{v}^m_s)^* + m\bm{v}^m_s\,d(m\bm{v}^m_s)^* + d(m\bm{v}^m_s)\,d(m\bm{v}^m_s)^*.
\end{equation}
We now substitute for $md(\bm{v}^m_s)$ and for its adjoint the expression from equation~(\ref{eq:SDEgeneral}), obtaining
\begin{align}\label{eq:dmvmv}
d[m\bm{v}^m_s(m\bm{v}^m_s)^*] =&  \left[ m\bm{F}(\bm{x}_s^m)(\bm{v}^m_s)^* - m\bm{\gamma}(\bm{x}_s^m)\bm{v}^m_s(\bm{v}^m_s)^* \right] \,ds  \\
+& m\left (\bm{\sigma}(\bm{x}_s^m)\,d\bm{W}_s \right )(\bm{v}^m_s)^*\nonumber\\
+& \left[ m\bm{v}^m_s\bm{F}(\bm{x}_s^m)^* - m\bm{v}^m_s(\bm{v}^m_s)^*\bm{\gamma}^*(\bm{x}_s^m) \right] \,ds \nonumber\\
+&m\bm{v}^m_s\left (\bm{\sigma}(\bm{x}_s^m)\,d\bm{W}_s \right )^* + \bm{\sigma}(\bm{x}_s^m)\bm{\sigma}^*(\bm{x}_s^m)\,ds \nonumber.
\end{align}
Because of Lemma~\ref{lemma:convergenceVas}, we expect the terms proportional to $m\bm{v}_s^m$ to converge to zero in probability. Defining
\begin{equation}\label{eq:tildeU}
\tilde{\bm{U}}_t^m =  \int_0^t m{\bm{v}}^m_s{\bm{F}}^*(\bm{x}_s^m) ds + \int_0^tm{\bm{v}}^m_s({\bm{\sigma}}(\bm{x}_s^m)d\bm{W}_s)^*,
\end{equation}
we can rewrite equation~(\ref{eq:dmvmv}) as
\begin{align}\label{eq:lyapunov}
 -& m\bm{v}_t^m(\bm{v}_t^m )^*\bm{\gamma}^*(\bm{x}_t^m)dt  - \bm{\gamma}(\bm{x}_t^m)m \bm{v}_t^m(\bm{v}_t^m)^*dt \\ = & d[m\bm{v}_t^m(m\bm{v}_t^m)^*]-\bm{\sigma}(\bm{x}_t^m)\bm{\sigma}^*(\bm{x}_t^m)\,dt 
 - d\tilde{\bm{U}}_t^m - d(\tilde{\bm{U}}_t^m)^*.\nonumber
\end{align}
Equation~(\ref{eq:lyapunov}) can be written as 
\begin{align}\label{eq:lyapunov2}
&[m\bm{v}_t^m(\bm{v}_t^m )^*dt][-\bm{\gamma}^*(\bm{x}_t^m)] + [- \bm{\gamma}(\bm{x}_t^m)][m \bm{v}_t^m(\bm{v}_t^m)^*dt] \\
= & d[m\bm{v}_t^m(m\bm{v}_t^m)^*]-\bm{\sigma}(\bm{x}_t^m)\bm{\sigma}^*(\bm{x}_t^m)\,dt - d\tilde{\bm{U}}_t^m - d(\tilde{\bm{U}}_t^m)^*.\nonumber
\end{align}
Denoting $m\bm{v}_t^m(\bm{v}_t^m )^*dt$ by $\bm{V}$, $-\bm{\gamma}(\bm{x}_t^m)$ by $\bm{A}$ and the right-hand side of equation~(\ref{eq:lyapunov2}) by $\bm{C}$, we obtain
\begin{equation}\label{eq:lyapunovgeneral}
\bm{A}\bm{V} + \bm{V}\bm{A}^* = \bm{C},
\end{equation}
which is a Lyapunov equation \cite{ortega,bellman}. By \cite[Theorem 6.4.2]{ortega},  if the real parts of all eigenvalues of $\bm{A}$ are negative, then the Lyapunov equation has a unique solution, given by \cite[Chapter 11]{bellman}
\begin{equation}\label{eq:lyapunovanalytical}
\bm{V} = -\int_0^\infty e^{\bm{A}y}\bm{C}e^{\bm{A}^*y}\,dy. 
\end{equation}
By Assumption~1, this applies to $\bm{A} = -\bm{\gamma}(\bm{x}_t^m)$, giving
\begin{align}\label{eq:mvv}
m\bm{v}_t^m(\bm{v}_t^m)^*dt =& -\int_0^\infty e^{-\bm{\gamma}(\bm{x}_t^m)y}\left ( d[m\bm{v}_t^m(m\bm{v}_t^m)^*]-\bm{\sigma}(\bm{x}_t^m)\bm{\sigma}^*(\bm{x}_t^m)\,dt  \right. \\
-& \left . d\tilde{\bm{U}}_t^m - d(\tilde{\bm{U}}_t^m)^*\right )e^{-\bm{\gamma}^*(\bm{x}_t^m)y}\,dy \nonumber\\
=&\underbrace{-\int_0^\infty e^{-\bm{\gamma}(\bm{x}_t^m)y}d[m\bm{v}_t^m(m\bm{v}_t^m)^*]e^{-\bm{\gamma}^*(\bm{x}_t^m)y}\,dy}_{d\bm{C}^1_t}\nonumber \\
+& \underbrace{\int_0^\infty e^{-\bm{\gamma}(\bm{x}_t^m)y}\left (\bm{\sigma}(\bm{x}_t^m)\bm{\sigma}^*(\bm{x}_t^m)\,dt\right )\,e^{-\bm{\gamma}^*(\bm{x}_t^m)y}\,dy}_{d\bm{C}^2_t} \nonumber\\
+& \underbrace{\int_0^\infty e^{-\bm{\gamma}(\bm{x}_t^m)y}\left (d\tilde{\bm{U}}_t^m + d(\tilde{\bm{U}}_t^m)^*\right )e^{-\bm{\gamma}^*(\bm{x}_t^m)y}\,dy}_{d\bm{C}^3_t} . \nonumber
\end{align}
We will treat each term in a different way: after substituting the above expression into equation~(\ref{eq:subvvstar}), the term with $\bm{C}_t^1$ will be included in the $\bm{H}_t^m$ process (in the notation of Lemma~\ref{theorem:KP}), the $\bm{C}_t^2$ term will become a part of the noise-induced drift term $\bm{S}$ in the limiting equation~(\ref{eq:SKlimit}), and the $\bm{C}_t^3$ term will become a part of $\bm{U}_t^m$ which will be shown to converge to zero. 
For the first term, 
\begin{equation}
d(C^1_t)_{ij} = - \int_0^\infty (e^{-\bm{\gamma}(\bm{x}_t^m)y})_{i k_1}(e^{-\bm{\gamma}^*(\bm{x}_t^m)y})_{k_2 j}\,dy \,d[m({v}_t^m)_{k_1}(m{v}_t^m)_{k_2}^*],
\end{equation}
where the integral exists and is finite for all $t\in[0,T]$. 
For the second term, $d\bm{C}^2_t=\bm{J}(\bm{x}_t^m)dt$, where $\bm{J}(\bm{x}):\mathcal{U}\rightarrow\mathbb{R}^{d\times d}$ is the solution to the Lyapunov equation
\begin{equation}\label{eq:Glyapunov}
\bm{J}\bm{\gamma}^* + \bm{\gamma}\bm{J} = \bm{\sigma}\bm{\sigma}^*,
\end{equation} 	
as follows from differentiating the (Lebesgue) integrals in the identity
\begin{equation}
{\int_0^t[\bm{J}(\bm{x}_s^m)\bm{\gamma}^*(\bm{x}_s^m) + \bm{\gamma}(\bm{x}_s^m)\bm{J}(\bm{x}_s^m)]\,ds = \int_0^t \bm{\sigma}(\bm{x}_s^m)\bm{\sigma}^*(\bm{x}_s^m)\,ds. }
\end{equation}
For the third term, using the equation~(\ref{eq:tildeU}) for $\tilde{\bm{U}}^m$, the entries of $\bm{C}^3$ can be written as
\begin{align}
(\bm{C}^3_t)_{ij} =&  \int_0^t \int_0^\infty (e^{-\bm{\gamma}(\bm{x}_s^m)y})_{i k_1}\left ([m{\bm{v}}^m_s{\bm{F}}^*(\bm{x}_s^m)]_{k_1 k_2} ds  \right. \\ 
+&\left . [m{\bm{v}}^m_s({\bm{\sigma}}(\bm{x}_s^m)d\bm{W}_s)^*]_{k_1 k_2} 
+ [{\bm{F}}(\bm{x}_s^m) (m{\bm{v}}^m_s)^*]_{k_1 k_2}ds \right .\nonumber \\
+& \left . [{\bm{\sigma}}(\bm{x}_s^m)d\bm{W}_s(m{\bm{v}}^m_s)^*]_{k_1 k_2}\right )(e^{-\bm{\gamma}^*(\bm{x}_s^m)y})_{k_2 j}\,dy \nonumber \\
=& \sum_{k_1 k_2} \int_0^t \int_0^\infty(e^{-\bm{\gamma}(\bm{x}_s^m)y})_{i k_1}(e^{-\bm{\gamma}^*(\bm{x}_s^m)y})_{k_2 j}\,dy\left  ([m{\bm{v}}^m_s{\bm{F}}^*(\bm{x}_s^m)]_{k_1 k_2} ds  \right .\nonumber\\ 
+& \left .[m{\bm{v}}^m_s({\bm{\sigma}}(\bm{x}_s^m)d\bm{W}_s)^*]_{k_1 k_2} 
+ [{\bm{F}}(\bm{x}_s^m) (m{\bm{v}}^m_s)^*]_{k_1 k_2}ds \right.\nonumber \\ 
+&\left . [{\bm{\sigma}}(\bm{x}_s^m)d\bm{W}_s(m{\bm{v}}^m_s)^*]_{k_1 k_2} \right ).\nonumber
\end{align}

We substitute the expression for $m\bm{v}_t^m(\bm{v}_t^m)^*\,dt$ back into equation~(\ref{eq:subvvstar}). In the resulting formula for $\bm{x}_t^m$, the contribution from $\bm{C}^3$ will form a vector-valued process $\bm{U}^m$. Integrating equation~(\ref{eq:sub}) by parts and substituting equation~(\ref{eq:mvv}) for $(v_s^m)_j(v_s^m)_{l} ds$, 
\begin{align}
	\label{eq:xmcomplete}
	(&{x}_t^m)_i = {x}_i + ({U}_t^m)_i + \int_0^t (\bm{\gamma}^{-1}(\bm{x}_s^m)\bm{F}(\bm{x}_s^m))_i \,ds \\
+ &\left (\int_0^t (\bm{\gamma}^{-1}(\bm{x}_s^m)\bm{\sigma}(\bm{x}_s^m))d\bm{W}_s\right) _i \nonumber\\
	+& \int_0^t\frac{\partial}{\partial x_{l}}[(\gamma^{-1})_{ij}(\bm{x}^m_s)]J_{j{l}}(\bm{x}_s^m)\,ds \nonumber\\
	+& \int_0^t\frac{\partial}{\partial x_{l}}[(\gamma^{-1})_{ij}(\bm{x}^m_s)] \times \nonumber \\
&\left [-\int_0^\infty (e^{-\bm{\gamma}(\bm{x}_s^m)y})_{jk_1}(e^{-\bm{\gamma}^*(\bm{x}_s^m)y})_{k_2 l}\,dy \right ] d[(mv_s^m)_{k_1}(mv_s^m)_{k_2}],\nonumber
\end{align}
where $\bm{U}_t^m$ is
\begin{align}\label{eq:fullU}
(\bm{U}^m_t)_i =& (\gamma^{-1})_{ij}(\bm{x}^m_t)m({v}_t^m)_j - (\gamma^{-1})_{ij}(\bm{x})m{v}_j \\
+&\int_0^t\frac{\partial}{\partial x_{l}}[(\gamma^{-1})_{ij}(\bm{x}^m_s)]\times \nonumber\\
&\left [ \int_0^\infty(e^{-\bm{\gamma}(\bm{x}_s^m)y})_{jk_1}(e^{-\bm{\gamma}^*(\bm{x}_s^m)y})_{k_2 l}\,dy \times \right . \nonumber\\
& \left  ([m{\bm{v}}^m_s{\bm{F}}^*(\bm{x}_s^m)]_{k_1 k_2} ds 
 +[m{\bm{v}}^m_s({\bm{\sigma}}(\bm{x}_s^m)d\bm{W}_s)^*]_{k_1 k_2} \right . \nonumber\\
+ & \left .  [{\bm{F}}(\bm{x}_s^m) (m{\bm{v}}^m_s)^*]_{k_1 k_2}ds + [{\bm{\sigma}}(\bm{x}_s^m)d\bm{W}_s(m{\bm{v}}^m_s)^*]_{k_1 k_2} \right ).\nonumber
\end{align}
Now we prove that $\bm{U}_t^m\rightarrow 0$ in $L^2$, and hence in probability, with respect to $C_{\mathbb{R}^d}[0,T]$.  By Lemma~\ref{lemma:convergenceVas}, the first two terms on the right-hand side of equation~(\ref{eq:fullU}) go to zero in $L^2$ with respect to $C_{\mathbb{R}^d}[0,T]$. The rest of the terms in $\bm{U}^m$ are Lebesgue or It\^o integrals with integrands that are products of continuous functions and $m(v_t^m)_i$. {We need a lemma} about the convergence of these integrals to zero.  Recall that in Lemma 2 we have shown that  $m|\bm{v}_t^m| \rightarrow 0$ in $L^2$.  The next lemma proves an explicit bound on the rate of this convergence. 

\begin{lemma} 
\label{lemma:convergenceVpointwise}
For each $m>0$, let $(\bm{x}_t^m, \bm{v}_t^m)$ be the solution to the system ~(1) with functions $\bm{F}$, $\bm{\gamma}$, and $\bm{\sigma}$ satisfying Assumptions 1-3. Then for any fixed $t\in[0,T]$, 
\begin{equation}
	E\left [ m|\bm{v}_t^m|^2\right ] \leq C, 
\end{equation}
where $C$ is a constant independent of $m$ and of $\; t \leq T$. Furthermore, this implies that
\begin{equation}
 E\left [|m\bm{v}_t^m|^2\right ] \leq Cm.
\end{equation}
\end{lemma}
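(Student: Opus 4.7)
My plan is to apply It\^o's formula directly to $|\bm{v}_t^m|^2$ (a much simpler choice than the $f_m$ used in Lemma~\ref{lemma:convergenceVas}) and reduce the assertion to an ODE comparison with exponential integrating factor. From the SDE for $\bm{v}_t^m$ in~(\ref{eq:SDEgeneral}), a direct It\^o expansion produces
\begin{equation*}
d|\bm{v}_t^m|^2 = \tfrac{2}{m}(\bm{v}_t^m,\bm{F}(\bm{x}_t^m))\,dt - \tfrac{2}{m}(\bm{v}_t^m,\bm{\gamma}(\bm{x}_t^m)\bm{v}_t^m)\,dt + \tfrac{1}{m^2}\mathrm{Tr}(\bm{\sigma}\bm{\sigma}^*(\bm{x}_t^m))\,dt + dN_t,
\end{equation*}
where $N_t=\tfrac{2}{m}\int_0^t(\bm{v}_s^m)^*\bm{\sigma}(\bm{x}_s^m)\,d\bm{W}_s$ is a local martingale.

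I would then combine three elementary bounds: the ellipticity estimate $(\bm{\gamma}\bm{v},\bm{v})\ge c_\lambda|\bm{v}|^2$ from Assumption~\ref{assume:bddcoeffs}; the uniform bounds $|\bm{F}|,|\bm{\sigma}|\le C_T$ on $\mathcal{K}$ from Assumption~\ref{assume:stop}; and Young's inequality applied to $\tfrac{2}{m}(\bm{v}_t^m,\bm{F})$ to absorb it into $\tfrac{c_\lambda}{m}|\bm{v}_t^m|^2$. This yields
\begin{equation*}
d|\bm{v}_t^m|^2 \le \Big[-\tfrac{c_\lambda}{m}|\bm{v}_t^m|^2+\tfrac{K}{m^2}\Big]dt+dN_t,
\end{equation*}
for $m$ below some $m_0$ and for a constant $K$ depending only on $c_\lambda$, $C_T$, $d$. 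The crucial structural feature is the strong negative drift coefficient $-c_\lambda/m$; multiplying through by the integrating factor $e^{c_\lambda t/m}$ cancels the $|\bm{v}_t^m|^2$ term on the right-hand side and leaves a linear inequality amenable to direct integration.

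To pass from a pathwise inequality to an expectation estimate I would localize at $\tau_N=\inf\{t:|\bm{v}_t^m|\ge N\}$, which renders the stopped integral against $dN_s$ an honest martingale with zero mean. Taking expectations, evaluating the deterministic integral, sending $N\to\infty$ (Assumption~\ref{assume:existence} ensures $\tau_N\to\infty$ a.s.), and applying Fatou's lemma, I arrive at
\begin{equation*}
E\!\left[|\bm{v}_t^m|^2\right]\le|\bm{v}|^2 e^{-c_\lambda t/m}+\tfrac{K}{c_\lambda m}\bigl(1-e^{-c_\lambda t/m}\bigr)\le|\bm{v}|^2+\tfrac{K}{c_\lambda m}.
\end{equation*}
Multiplying by $m$ and using that the initial datum $\bm{v}$ is fixed yields $E[m|\bm{v}_t^m|^2]\le m|\bm{v}|^2+K/c_\lambda\le C$ uniformly in $m\in(0,m_0]$ and in $t\le T$; a further factor of $m$ gives the second claim $E[|m\bm{v}_t^m|^2]\le Cm$.

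The main obstacle is the rigorous treatment of the stochastic-integral term: since we have no a priori uniform-in-time $L^2$-bound on $\bm{v}_t^m$, the process $N_t$ is only a local martingale, and the localization--Fatou step is what legitimizes dropping its expectation while preserving the upper inequality we need. Everything else is an ODE comparison in disguise, made clean by the exponential integrating factor.
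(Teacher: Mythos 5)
Your proposal is correct and is essentially the paper's argument: the paper applies the generator to the kinetic energy $\phi(\bm{v})=\frac{m}{2}|\bm{v}|^2$, obtains the same drift bound $\mathcal{L}\phi\le-\frac{c_\lambda}{m}\phi+\frac{C_T^2 d}{m}$ via the identical ellipticity and Young's-inequality estimates, and then uses the exponential factor $e^{c_\lambda t/m}$ together with Dynkin's formula, which is exactly your integrating-factor-plus-localization step in different notation. The only cosmetic difference is that you work with $|\bm{v}_t^m|^2$ directly and spell out the localization/Fatou treatment of the local martingale, which the paper subsumes in its appeal to Dynkin's formula.
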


\begin{proof}
Consider the generator of the diffusion process defined by the system (1):
\begin{equation}
	\mathcal{L} = \frac{\sigma_{ik}(\bm{x})\sigma_{jk}(\bm{x})}{2m^2}\frac{\partial^2}
	{\partial v_i \partial v_j} + v_i \frac{\partial}{\partial x_i} + \frac{F_i(\bm{x})}{m}\frac{\partial}{\partial v_i} - \frac{\gamma_{ik}(\bm{x})v_k}{m}\frac{\partial}{\partial v_i} ,
\end{equation}
and apply it to the kinetic energy 
\begin{equation}
	\label{eq:kineticE}
	\phi(\bm{x},\bm{v}) = \frac{m}{2}|\bm{v}|^2. 
\end{equation}
The result is 
\begin{equation}
	\mathcal{L}\phi = \frac{Tr(\bm{\sigma}(\bm{x})\bm{\sigma}^*(\bm{x}))}{2m}+ F_i(\bm{x})v_i
	-\gamma_{ik}(\bm{x})v_kv_i.
\end{equation}
Next, from Assumption~1 we have
\begin{equation}
	\gamma_{ik}(\bm{x})v_kv_i \geq c_\lambda |\bm{v}|^2.
\end{equation}
We use this fact along with the bound
\begin{equation}
	F_{i}(\bm{x})v_i = \left( \frac{F_{i}(\bm{x})}{\sqrt{c_{\lambda}}} \right) (\sqrt{c_{\lambda}}v_i) \leq \frac{1}{2c_{\lambda}}|\bm{F}(\bm{x})|^2 + \frac{c_{\lambda}}{2}|\bm{v}|^2 ,
\end{equation}
to obtain
 \begin{equation}
	\mathcal{L}\phi \leq - \frac{c_{\lambda}}{2} |\bm{v}|^2 + \frac{1}{2c_{\lambda}}|\bm{F}(\bm{x})|^2 +  \frac{Tr(\bm{\sigma}(\bm{x})\bm{\sigma}^*(\bm{x}))}{2m},
\end{equation}
for all $\bm{x} \in \mathcal{U}, \bm{v}\in\mathbb{R}^d$. Recall that for $0 \leq t \leq T$, $\bm{x}^m_t$ lies in the compact set $\mathcal{K}$, so that $|\bm{F}(\bm{x})|$ and $|\bm{\sigma}(\bm{x})|$ are bounded by $C_T>0$ (Assumption 3). Thus, we obtain the bound 
\begin{equation}
	\mathcal{L}\phi(\bm{v}) \leq -\frac{c_{\lambda}}{m}\phi(\bm{v}) + \frac{C_T^2}{2c_{\lambda}}+ \frac{C_T^2d}{2m}
\end{equation}
 For $m<c_{\lambda}d$, the second term is less than the third and thus
\begin{equation}
	\label{eq:MainEst}
	\mathcal{L}\phi(\bm{v}) \leq -\frac{c_{\lambda}}{m}\phi(\bm{v}) +\frac{C_T^2d}{m}.
\end{equation}

Applying the It\^o formula to the process $y_t ^m \equiv \exp(\frac{c_{\lambda}}{m}t)(\phi(\bm{v}^m _t)-\frac{C_T^2d}{c_{\lambda}})$ we obtain
\begin{equation}
	dy ^m _t = 
\left[ \frac{c_{\lambda}}{m}e^{\frac{c_{\lambda}}{m}t}\left(\phi(\bm{v}^m _t)-\frac{C_T^2d}{c_{\lambda}}\right ) +e^{\frac{c_{\lambda}}{m}t}\mathcal{L}\phi(\bm{v} ^m _t) \right] dt + e^{\frac{c_{\lambda}}{m}t} (\bm{v}_t^m)^* \bm{\sigma}(\bm{x}_t^m)\;d\bm{W}_t. 
\end{equation}
Using inequality~(\ref{eq:MainEst}) we obtain,
\begin{equation}
	\frac{c_{\lambda}}{m}e^{\frac{c_{\lambda}}{m}t}\left(\phi(\bm{v}^m _t)-\frac{C_T^2d}{c_{\lambda}}\right ) +e^{\frac{c_{\lambda}}{m}t}\mathcal{L}\phi(\bm{v} ^m _t)
\leq 0.
\end{equation}
Thus, by Dynkin's formula \cite{oksendal}, 
\begin{equation}
	E\left [e^{\frac{c_{\lambda}}{m}t}\left(\phi(\bm{v} ^m _t)-\frac{C_T^2d}{c_{\lambda}}\right )\right ] \leq \frac{m}{2} |\bm{v}|^2 -\frac{C_T^2d}{c_{\lambda}}.
\end{equation}
This implies
\begin{equation}
	\label{eq:mv2}
	E\left [\frac{m}{2}|\bm{v} ^m _t|^2\right ] \leq \frac{C_T^2d}{c_{\lambda}} \left( 1 - e^{-\frac{c_{\lambda}}{m}t} \right ) + \frac{me^{-\frac{c_{\lambda}}{m}t}}{2} |\bm{v}|^2 \leq \frac{C_T^2d}{c_{\lambda}} + \frac{m}{2} |\bm{v}|^2 \leq \frac{C}{2},
\end{equation}
for $C$ independent of $m$.
Q.E.D.
\end{proof}
Now we can prove a lemma to show the integrals in $\bm{U}^m$ converge to zero. 
\begin{lemma}\label{lemma:intmv}
For each $m > 0$, let $\bm{x}_t^m$ be an $\mathcal{F}_t$-adapted process with values in the compact set $\mathcal{K}\subset\mathcal{U}$ for $t \in [0,T]$. If ${g}(x):\mathcal{K}\rightarrow\mathbb{R}$ is a continuous function such that $|g(\bm{x})|\leq C_T$, then for all $\bm{x}\in \mathcal{K}$
\begin{equation}
\lim_{m\rightarrow 0} E\left [ \left (\sup_{0\leq t\leq T} \left |\int_0^t g(\bm{x}_s^m)m(v_s^m)_i\,ds\right |\right )^2\right ] = 0
\end{equation}
and 
\begin{equation}\label{eq:Itomv}
\lim_{m\rightarrow 0} E\left [ \left (\sup_{0\leq t\leq T} \left |\int_0^t g(\bm{x}_s^m)m(v_s^m)_i\,d(W_s)_j\right |\right )^2\right ] = 0,
\end{equation}
for $i=1,...,d, \; j=1,...,k$. 
\end{lemma}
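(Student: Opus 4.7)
The plan is to reduce both statements to the two pointwise/uniform bounds on $m\bm{v}^m$ that were just established, namely the $L^2$ convergence of $\sup_{0\le t \le T}|m\bm{v}_t^m|$ from Lemma~\ref{lemma:convergenceVas} (with $p=2$), and the $O(m)$ bound on $E[|m\bm{v}_t^m|^2]$ from Lemma~\ref{lemma:convergenceVpointwise}. No new stochastic analysis is needed beyond Doob's $L^2$ maximal inequality and It\^o's isometry.

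First I would handle the Lebesgue integral. Pulling the absolute value inside the integral and using $|g|\le C_T$ gives the pathwise bound
\begin{equation*}
\sup_{0\le t \le T}\left|\int_0^t g(\bm{x}_s^m)\,m(v_s^m)_i\,ds\right|
\;\le\; C_T T \sup_{0\le s \le T}|m\bm{v}_s^m|.
\end{equation*}
Squaring and taking expectations, the right-hand side tends to zero by Lemma~\ref{lemma:convergenceVas} applied with $p=2$, which proves the first assertion.

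Next I would handle the It\^o integral. Since the integrand $g(\bm{x}_s^m)\,m(v_s^m)_i$ is $\mathcal{F}_s$-adapted, continuous, and (by Lemma~\ref{lemma:convergenceVpointwise} together with $|g|\le C_T$) satisfies $\int_0^T E[g^2 m^2 (v_s^m)_i^2]\,ds<\infty$, the stochastic integral is a genuine square-integrable martingale. Doob's $L^2$ maximal inequality followed by the It\^o isometry yields
\begin{equation*}
E\!\left[\left(\sup_{0\le t\le T}\left|\int_0^t g(\bm{x}_s^m)\,m(v_s^m)_i\,d(W_s)_j\right|\right)^{\!2}\right]
\;\le\; 4\,C_T^2 \int_0^T E\!\left[|m\bm{v}_s^m|^2\right]ds.
\end{equation*}
Applying the second bound of Lemma~\ref{lemma:convergenceVpointwise}, the right-hand side is at most $4 C_T^2\, C\, m\, T$, which tends to zero as $m\to 0$, proving the second assertion.

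The only subtle point---so the ``main obstacle'' is really minor---is to justify the martingale (rather than merely local martingale) status of the It\^o integral before invoking Doob's inequality; this is immediate from the uniform-in-$s$ bound $E[|m\bm{v}_s^m|^2]\le Cm$ combined with boundedness of $g$ on the compact set $\mathcal{K}$, since those together force the integrand to lie in the Hilbert space of predictable processes square-integrable against $d\langle W_j\rangle_s = ds$ on $[0,T]$.
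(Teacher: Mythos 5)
Your proof is correct and follows essentially the same route as the paper: for the stochastic integral both arguments apply Doob's $L^2$ maximal inequality, the It\^o isometry, and the bound $E[|m\bm{v}_s^m|^2]\le Cm$ from Lemma~\ref{lemma:convergenceVpointwise}, and your added remark that the integral is a genuine square-integrable martingale (so that Doob's inequality applies) is a point the paper leaves implicit. The only difference is in the Lebesgue integral, where the paper uses Cauchy--Schwarz together with the pointwise bound of Lemma~\ref{lemma:convergenceVpointwise} to obtain an explicit $O(m)$ rate, whereas you bound the supremum by $C_T T\sup_{0\le s\le T}|m\bm{v}_s^m|$ and invoke the $L^2$ convergence of the supremum from Lemma~\ref{lemma:convergenceVas} with $p=2$; both are valid.
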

\begin{proof}
First note that, 
\begin{equation}
E\left [ \left (\sup_{0\leq t\leq T} \left |\int_0^t g(\bm{x}_s^m)m(v_s^m)_i\,ds \right | \right )^2 \right ] \leq E\left [\left(\int_0^T \left |g(\bm{x}_s^m)m(v_s^m)_i\right |\,ds \right ) ^2 \right ].
\end{equation}	
 By the Cauchy-Schwarz inequality, 
\begin{align}
E\left [\left (\int_0^T \left |g(\bm{x}_s^m)m(v_s^m)_i\right |\,ds \right )^2 \right ] \leq & T\int_0^T E\left [ ( g(\bm{x}_s^m)m(v_s^m)_i )^2 \right ]\,ds \\
\leq & C_T^2T \int_0^T E\left [(m(v_s^m)_i)^2\right ]\,ds, \nonumber
\end{align}
where the continuous function $g$ is bounded by $C_T$ on $\mathcal{K}$. From  Lemma~\ref{lemma:convergenceVpointwise} we have, 
\begin{equation}
E\left [\left (\int_0^T \left |g(\bm{x}_s^m)m(v_s^m)_i\right |\,ds \right )^2 \right ]\leq T^2 Cm.
\end{equation}
Taking the limit of both sides as $m\rightarrow 0$,
\begin{equation}
\lim_{m\rightarrow 0}E\left [\left (\int_0^T |g(\bm{x}_s^m)m(v_s^m)_i|\,ds \right )^2 \right ] = 0. 
\end{equation}
Therefore,
\begin{equation}
\lim_{m\rightarrow 0}  E\left [ \left (\sup_{0\leq t\leq T} \left |\int_0^t g(\bm{x}_s^m)m(v_s^m)_i\,ds\right |\right )^2\right ] 
\le \lim_{m\rightarrow 0} E\left [\left (\int_0^T |g(\bm{x}_s^m)m(v_s^m)_i|\,ds \right )^2 \right ] 
= 0.
\end{equation}
To estimate the It\^o integral in (\ref{eq:Itomv}), we first use It\^o isometry: 
\begin{align}
E\left [\left(\int_0^T g(\bm{x}_s^m)m(v_s^m)_i\,d(W_s)_j \right)^2 \right ] = & \int_0^T E\left [ (g(\bm{x}_s^m)m(v_s^m)_i )^2\right ]\,ds \\
\leq & C_T^2 \int_0^T E[ (m(v_s^m)_i)^2]\,ds. \nonumber
\end{align}
Using Lebesgue dominated convergence theorem and Doob's maximal inequality (see page 14 of \cite{karatzas}), 
\begin{equation}
E\left [ \left (\sup_{0\leq t\leq T} \left |\int_0^t g(\bm{x}_s^m)m(v_s^m)_i\,d(W_s)_j\right |\right )^2\right ] \leq 4 E\left [\left( \int_0^T g(\bm{x}_s^m)m(v_s^m)_i\,d(W_s)_j \right)^2 \right ] \rightarrow 0
\end{equation}
as $m\rightarrow 0$. 

Q.E.D.
\end{proof}

We use Lemma~\ref{lemma:intmv} to show $\bm{U}^m$ converges to zero in $L^2$ with respect to $C_{\mathbb{R}^d}[0,T]$ as $m\rightarrow0$. Note that all functions in the expression~(\ref{eq:fullU}) for $\bm{U}^m$ are continuous. The integrals $\int_0^\infty(e^{-\bm{\gamma}(\bm{x}_s^m)y})_{jk_1}(e^{-\bm{\gamma}^*(\bm{x}_s^m)y})_{k_2 l}\,dy$ are continuous because $\bm{\gamma}$ is continuous, matrix exponentiation is a continuous operation and the integrand decays exponentially with $y$. Therefore, $\bm{U}^m\rightarrow 0$ as $m\rightarrow 0$ in $L^2$ with respect to $C_{\mathbb{R}^d}[0,T]$. 

To verify the rest of the assumptions of Lemma~\ref{theorem:KP}, including Condition~\ref{condition}, we first write equation~(\ref{eq:xmcomplete}) in the form
\begin{equation}
\bm{x}_t^m = \bm{x} + \bm{U}_t^m + \int_0^t \bm{f}(\bm{x}_t^m)\,d\bm{H}_t^m.
\end{equation}
Define $\bm{f}:\mathcal{U}\rightarrow \mathbb{R}^{d\times (1 + k + 1 + d^2)}$  as 
\begin{equation}
\bm{f}(\bm{x}) = \begin{pmatrix} \bm{\gamma}^{-1}(\bm{x})\bm{F}(\bm{x}), & \bm{\gamma}^{-1}(\bm{x})\bm{\sigma}(\bm{x}), & \bm{S}(\bm{x}), & \bm{f}^1(\bm{x}), ..., \bm{f}^{d}(\bm{x}) \end{pmatrix}
\end{equation}
where the components of $\bm{S}(\bm{x}):\mathcal{U}\rightarrow \mathbb{R}^d$ are defined as
\begin{equation}
{S}_i(\bm{x}) =  \int_0^t\frac{\partial}{\partial x_{l}}[(\gamma^{-1})_{ij}(\bm{x}^m_s)]J_{j{l}}(\bm{x}),
\end{equation}
$\bm{J}$ is the solution of the Lyapunov equation~(\ref{eq:Glyapunov}) and the components of $\bm{f}^{\beta}(\bm{x}):\mathcal{U}\rightarrow \mathbb{R}^{d\times d}$ are defined as
\begin{equation}
f^{k_2}_{i k_1}(\bm{x}) =\int_0^t\frac{\partial}{\partial x_{l}}[(\gamma^{-1})_{ij}(\bm{x})]
\left [- \int_0^\infty (e^{-\bm{\gamma}(\bm{x})y})_{jk_1}(e^{-\bm{\gamma}^*(\bm{x})y})_{k_2 l}\,dy \right ]
\end{equation}
for $k_1 ,k_2=1,2,...,d$. Next, $\bm{H}^m_t$ with paths in $C_{\mathbb{R}^{1+k+1+d^2}}[0,T]$ is defined as
\begin{equation}
\bm{H}^m_t = \begin{pmatrix} t \\ \bm{W}_t \\ t \\ (mv_t^m)_1m\bm{v}_t^m-mv_1m\bm{v} \\ \vdots \\ (mv_t^m)_d m\bm{v}_t^m-mv_d m\bm{v} \end{pmatrix}.
\end{equation}
By Lemma~\ref{lemma:convergenceVas}, $\bm{H}^m\rightarrow \bm{H}$ as $m\rightarrow 0$ in probability with respect to $C_{\mathbb{R}^{1+k+1+d^2}}[0,T]$, where 
\begin{equation}
\bm{H}_t = \begin{pmatrix} t \\ \bm{W}_t \\ t \\ 0 \\ \vdots \\ 0 \end{pmatrix}.
\end{equation}
Therefore, $(\bm{U}^m,\bm{H}^m)\rightarrow (\bm{0},\bm{H})$ as $m\rightarrow 0$ in probability with respect to $C_{\mathbb{R}^d\times\mathbb{R}^{1+k+1+d^2}}[0,T]$. All that is left, to be able to use Lemma~\ref{theorem:KP}, is to check Condition~\ref{condition}.

\subsection{Verification of Condition~1}
We need to find the Doob-Meyer decomposition of $\bm{H}_t^m$ and stochastically bound, uniformly in $m$, the bounded variation part of the decomposition, denoted $\bm{A}_t^m$. Only the last  $d^2$ rows of $\bm{H}^m$ depend on $m$. Furthermore, the columns of the matrix $(m\bm{v}_t^m (m\bm{v}_t^m)^*)$ make up the last $d^2$ rows of $\bm{H}^m$. That is, the first column of the matrix $(m\bm{v}_t^m (m\bm{v}_t^m)^*)$ is rows $1+k+1+1$ through $1+k+1+d$ of $\bm{H}^m$. The second column of the matrix $(m\bm{v}_t^m (m\bm{v}_t^m)^*)$ is rows $1+k+1+d+1$ through  $1+k+1+2d$ of $\bm{H}^m$ and so on. Consider the expression for $d(m\bm{v}_t^m(m\bm{v}_t^m)^*)$ given by equation~(\ref{eq:dmvmv}). Because the stochastic integrals are local martingales, $\bm{A}_t^m$ contains the columns of the Lebesgue integrals in the above expression. That is, 
\begin{equation}
\bm{A}_t^m = \begin{pmatrix} t \\ 0 \\ t \\ (\bm{\mathcal{A}}_t^m)^1 \\ \vdots \\ (\bm{\mathcal{A}}_t^m)^d \end{pmatrix},
\end{equation}
where
\begin{align}\label{eq:Atm}
\begin{pmatrix} (\bm{\mathcal{A}}_t^m)^1 ,& (\bm{\mathcal{A}}_t^m)^2, & \cdots, & (\bm{\mathcal{A}}_t^m)^d \end{pmatrix} =& \int_0^t m\bm{v}_s^m\bm{F}(\bm{x}_s^m)^*\,ds \\
+& \int_0^t\bm{F}(\bm{x}_s^m)(m\bm{v}_s^m)^*ds \nonumber\\
-&  \int_0^tm(\bm{v}_s^m) (\bm{\gamma}(\bm{x}_s^m)\bm{v}_s^m)^*\,ds \nonumber\\
-& \int_0^t\bm{\gamma}(\bm{x}_s^m)\bm{v}_s^m m(\bm{v}_s^m)^* \,ds \nonumber \\
+& \int_0^t\bm{\sigma}(\bm{x}_s^m)\bm{\sigma}^*(\bm{x}_s^m)\,ds.\nonumber
\end{align}  
We must show that $\bm{A}_t^m$ is stochastically bounded. Because $m\bm{v}^m\rightarrow 0$ in probability, the first and second terms on the right-hand side of equation~(\ref{eq:Atm}) go to zero in probability. By Assumption~3, $\bm{\sigma}(\bm{x}_t^m)\bm{\sigma}^*(\bm{x}_t^m)$ is bounded for all $t\in[0,T]$ and thus the fifth term is stochastically bounded in $m$. To prove stochastic boundedness of the third and fourth terms, it is enough to show $E[|m(v_s^m)_i(\bm{v}_s^m)|]$ is bounded uniformly in $m$ (based on previous works \cite{pavliotis,kupferman2004,hottovy2012}, we expect $\sqrt{m}\bm{v}_s^m$ to be of order one). For the rows we have $|m(v_s^m)_i(\bm{v}_s^m)|\leq m|\bm{v}_s^m|^2$ for every $i=1,...,d$. Using 
Lemma~\ref{lemma:convergenceVpointwise} we have
\begin{equation}
	E[ m|\bm{v}_s^m|^2] \leq C,
\end{equation}
uniformly in $m$.
 Thus, by the Chebyshev inequality, $\{V_t(\bm{A}^m)\}$ is stochastically bounded and this proves that $\bm{H}_t^m$ satisfies Condition~\ref{condition}. 

Therefore, $\bm{x}_t^m\rightarrow\bm{x}_t$ in probability as $m\rightarrow 0$. We use this together with boundedness to prove $L^2$ convergence: because $\bm{x}_t^m$ lies in a bounded set $\mathcal{K}$, there exists $N>0$ such that $P(|\bm{x}_t^m|\leq N)=1$ for all $t$ and $m$. Therefore, 
\begin{align}
	\lim_{m\rightarrow0}E\left [\left (\sup_{0\leq t\leq T}|\bm{x}_t^m-\bm{x}_t|\right )^2\right ] =& \lim_{m\rightarrow 0}\int_0^\infty P\left[ \left (\sup_{0\leq t\leq T}|\bm{x}_t^m-\bm{x}_t|\right )^2\geq x \right] \,dx \\
=& \int_0^{(2N)^2}\lim_{m\rightarrow0}P\left[ \left (\sup_{0\leq t\leq T}|\bm{x}_t^m-\bm{x}_t|\right )^2\geq x \right] \,dx \nonumber \\
=& 0. \nonumber
\end{align}
Q.E.D.
\end{proof}

\begin{remark}\label{remark2}
One may be tempted to apply Lemma~\ref{theorem:KP} to equation (\ref{eq:sub}) without integration by parts, because $m\bm{v}_t^m\rightarrow 0$. However, this would lead to the limiting equation,
\begin{equation}
	\label{eq:wronglimit}
	d\bm{x}_t = \bm{\gamma}^{-1}(\bm{x}_t)\bm{F}(\bm{x}_t) \,dt + \bm{\gamma}^{-1}(\bm{x}_t)\bm{\sigma}(\bm{x}_t)\,d\bm{W}_t. 
\end{equation}
This is not the equation we derived. In view of Lemma~\ref{lemma:convergenceVas}, if $\bm{\gamma}(\bm{x})=\bm{\gamma}_0$ is a constant matrix for all $\bm{x}$, then 
\begin{equation}
\lim_{m\rightarrow 0}P\left (\left (\sup_{0\leq t\leq T}\left |\int_0^tm\bm{\gamma}_0^{-1}d\bm{v}_s^m 
\right |\right )^2>\epsilon\right ) = \lim_{m\rightarrow 0}P\left (\left (\sup_{0\leq t\leq T}\left |\bm{\gamma}_0^{-1}m\bm{v}_t^m-\bm{\gamma}_0^{-1}m\bm{v}\right |\right )^2>\epsilon\right ) = 0,
\end{equation}
 similarly to \cite{nelson,freidlin2004}.  However, with $\bm{\gamma}(\bm{x})$ dependent on position, the limit will be non-zero because $m\bm{v}_t^m$ does not satisfy Condition~\ref{condition}.  Note that from the SDE~(\ref{eq:SDEgeneral}) for $d\bm{v}_t^m$
\begin{equation}
	m\bm{v}_t^m = m\bm{v} + \underbrace{\int_0^t\left (\bm{F}(\bm{x}_s^m)-\bm{\gamma}(\bm{x}_s^m)\bm{v}_s^m\right )ds}_{\bm{A}_t^m \text{ Bounded Variation}} \; + \underbrace{\int_0^t\bm{\sigma}(\bm{x}_s^m)\,d\bm{W}_s}_{\bm{M}_t^m \text{ Local Martingale}}. 
\end{equation}
 Because the limits of integration are finite, $\bm{A}_t^m$ has bounded variation for fixed $m>0$.  Note that $O(V_t(\bm{A}^m)) = O(\bm{v} _t ^m )$.  It can be shown explicitly in the special case in which the fluctuation-dissipation relation is satisfied (and we expect it to be true in general) that $\bm{v} _t ^m$ is of the order $m^{- \frac{1}{2}}$.  Therefore $O(V_t(\bm{A}^m)) = O(m^{-1/2})$ and Lemma~\ref{theorem:KP} cannot be used. 
\end{remark}

\section{One Dimension}\label{sec:1D}

As the first example, we apply Theorem~\ref{theorem} to a one-dimensional model of a Brownian particle. This is the model studied in \cite{hottovy2012} and earlier in \cite{sancho1982}. The particle's position satisfies
\begin{equation}\label{eq:SDEgeneral1D}
	\left \{ \begin{array}{rcl}
            dx_t^m &=& v_t^m\,dt\\
	dv_t^m &=& \left ( \frac{{F}(x_t^m)}{m} - \frac{{\gamma}(x_t^m)}{m}v_t^m\right )\,dt + \frac{{\sigma}(x_t^m)}{m}dW_t
	\end{array} \right .
\end{equation}
with initial conditions $x_0^m = x$ and $v_0^m = v$. For simplicity, we study the system on the whole real line, assuming the coefficients and their derivatives are bounded. These assumptions will be relaxed in Section~\ref{sec:BP}. Equation~(\ref{eq:Glyapunov}) for the noise-induced drift term is in this case
\begin{equation}
	2J(x)\gamma(x) = \sigma(x)^2.
\end{equation}
Thus, the limiting equation for $x_t$ is
\begin{equation}
	\label{eq:SKlimit1D}
	dx_t = \left (\frac{{F}(x_t)}{{\gamma}(x_t)} - \frac{{\gamma}'(x_t)}{2{\gamma}(x_t)^3}{\sigma}(x_t)^2\right )dt + \frac{{\sigma}(x_t)}{{\gamma}(x_t)}dW_t,
\end{equation}
with $x_0 = x$, which recovers prior results \cite{sancho1982,freidlin2011,hottovy2012}. 

It is instructive to illustrate on this simple example the key quantities entering the proof of Lemma~1, namely $\bm{f}$ and $\bm{H}_t^m$. Define $\bm{f}$, a continuous function from $\mathbb{R}$ to $\mathbb{R}^4$, as 
\begin{equation}
	\bm{f}(x) = \begin{pmatrix} \frac{{F}(x)}{{\gamma}(x)}, &\frac{{\sigma}(x)}{{\gamma}(x)}, &  - \frac{{\gamma}'(x)}{2{\gamma}(x)^3}{\sigma}(x)^2, &  - \frac{{\gamma}'(x)}{{\gamma}(x)^3} \end{pmatrix},
\end{equation}
and $\bm{H}_t^m$ with paths in $C_{\mathbb{R}^4}[0,T]$ as, 
\begin{equation}
	\bm{H}_t^m = \begin{pmatrix} t \\ W_t \\ t \\ \frac{1}{2}\left [(mv_t^m)^2 - (mv)^2\right ] \end{pmatrix}.
\end{equation}
We have $\lim_{m\rightarrow 0} \bm{H}_t^m = ( t, W_t, t, 0)^*$, and the limiting equation~(\ref{eq:SKlimit1D}) is recovered. 

The boundedness of the coefficients and their derivatives implies global existence of the strongly unique solutions 
$x_t^m$ to SDE~(\ref{eq:SDEgeneral}) for every $m>0$, and $x_t$ to SDE~(\ref{eq:SKlimit}); Assumptions 1-3 are thus satisfied. {However, because the
state space of the process (the real line) is unbounded, we can only conclude convergence  in probability 
(for comparison, see the last paragraph of the proof of Theorem~\ref{theorem}). Therefore, by Theorem~\ref{theorem}, $x_t^m\rightarrow x_t$ as $m\rightarrow 0$ in probability with respect to $C_{\mathbb{R}}[0,T]$.}

\section{Examples and applications}\label{sec5}

\subsection{Brownian particle in a {one-dimensional} diffusion gradient}\label{sec:BP}
The equations studied in this example model the experiment described in \cite{volpe2011}. In this experiment a colloidal particle is diffusing in a cylinder filled with water. The friction and noise coefficients depend on the particle'��s position, as described below, giving rise to a noise-induced drift. Even though we do not verify Assumption~2 in this case, the Smoluchowski-Kramers approximation derived in Theorem~\ref{theorem} agrees with the experimental results of \cite{volpe2011}. The equations are:
\begin{equation}\label{eq:LE}
\left\{\begin{array}{ccl}
	dx_t^m &=& v_t^m\;dt \\
	dv_t^m &=& \left [\frac{F(x_t^m)}{m} - \frac{k_BT}{mD(x_t^m)}v_t^m\right ]\;dt + 
	\frac{k_BT\sqrt{2}}{m\sqrt{D(x_t^m)}}\;dW_t
\end{array}\right.
\end{equation}
where $D(x)$ is the diffusion coefficient. Near $x = 0$ $D(x)$ can be expressed analytically \cite{brenner} and has the form shown in Fig.~1; an analogous behavior also holds near the top of the cylinder. The force $F$ results from effective gravity and electrostatic repulsion from the bottom and top walls of the container. Away from the lateral walls of the cylinder both forces are vertical so the horizontal components of particle's motion can be (and were) separated and the equations are written for the vertical component only.
\begin{figure}[h!]
\resizebox{.5\textwidth}{!}{
\includegraphics{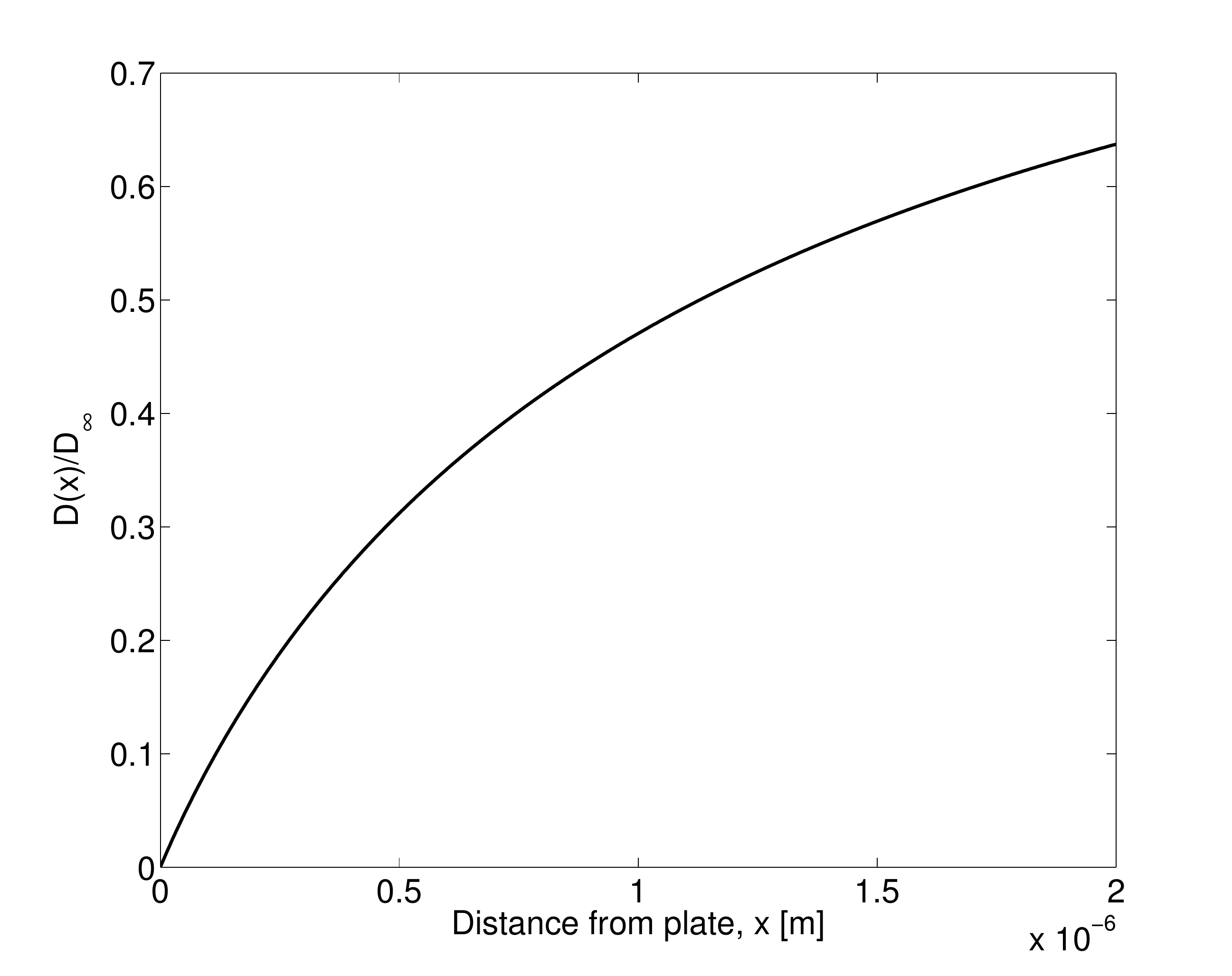}
}
\caption{Plot of the normalized diffusion coefficient $D(x)$ for a spherical particle of radius $1\,\rm{\mu m}$.}
\label{fig:D}
\end{figure}

An application of equation~(\ref{eq:SKlimit1D}) to this case gives the limiting equation
\begin{equation}
	dx_t = \left [\frac{D(x_t)F(x_t)}{k_BT} + D'(x_t)\right ]\;dt + \sqrt{2D(x_t)}\;dW_t. 
\end{equation}
The noise-induced term in the drift is thus $S(x) = D'(x)$, as observed in \cite{volpe2011}.

\subsection{Systems driven by a colored noise}\label{sec:OU}
The driving mechanisms of real physical systems are typically characterized by a non-zero correlation time. Therefore, models employing colored noise, instead of white noise, are often more appropriate to describe them. We work through two examples with Ornstein-Uhlenbeck colored noise. We calculate the limiting equations without stating explicit conditions for the existence and uniqueness assumed in Theorem~\ref{theorem}. In this Section we consider the multi-dimensional version of  equation~(\ref{eq:SDEOUCN}):
\begin{equation}\label{eq830000}
	\left \{\begin{array}{rcl}
	d\bm{x}_t &=& \bm{v}_t\,dt \\
	d\bm{v}_t &=&\left[ \frac{ \bm{F}(\bm{x}_t)}{m} - \frac{\bm{\gamma}(\bm{x}_t)}{m}\bm{v}_t + \frac{\bm{\sigma}(\bm{x}_t)}{m}\bm{\eta}_t \right] dt
	\end{array}\right .
\end{equation}
where $\bm{x}_t\in\mathcal{U}\subset\mathbb{R}^d$ and $\bm{\eta}_t$ is a $k$-dimensional stationary random process with zero mean and correlation time $\tau$.  To use the framework of Theorem~\ref{theorem}, we consider a special type of noise, the Ornstein-Uhlenbeck process defined as the stationary solution of the SDE
\begin{equation}
\label{eq:OU}
	d\bm{\eta}_t = -\frac{\bm{A}}{\tau}\bm{\eta}_t\,dt + \frac{\bm{\lambda}}{\tau}d\bm{W}_t, 
\end{equation}
where $\bm{A}$ is a $k$ by $k$ constant invertible matrix, $\bm{\lambda}$ is a $k$ by $\ell$ constant matrix, and $\bm{W}$ an $\ell$-dimensional Wiener process. Defining the variable $\bm{\zeta}_t$ by the equation $d\bm{\zeta}_t=\bm{\eta}_t\,dt$, we use the above framework by setting $\bar{\bm{x}} = (\bm{x},\bm{\zeta})$ and $\bar{\bm{v}} = (\bm{v},\bm{\eta})$.  We will now illustrate this use of Theorem~\ref{theorem} to derive the limit, as the correlation time $\tau$ and mass $m$ tend to zero, on two concrete examples.  Note that here the initial condition $\bm{\eta}_0$ is taken to be a random variable distributed according to the stationary distribution corresponding to (\ref{eq:OU}), so that it is Gaussian and depends on $\tau$, but this presents no additional difficulty and the theorem can be generalized to include this case.

\subsubsection{A system with colored noise and constant friction}\label{sec:constfric}
Consider the system 
\begin{equation}
\left \{ \begin{array}{rcl}
	\mu \ddot{x}_t &= &{F}(x_t) + \left[ -\dot{x}_t + f(x_t)\eta_t \right]\\
	d{\eta}_t &=& - \frac{a \eta_t}{\epsilon^2} \,dt + \frac{\sqrt{2\lambda}}{\epsilon^2}\,dW_t
\end{array} \right.
\end{equation}
with $x_t$ and $\eta_t$ one-dimensional.
This is equivalent to the example in \cite[Section 11.7.6]{pavliotis} with the substitution $\eta_t = \frac{1}{\epsilon}\tilde{\eta_t}$, where $\tilde{\eta}_t$ is the colored noise used in the reference. Setting $\mu = k \epsilon^2$, we rewrite the above system as
\begin{equation}
\label{eq:OUconstFric}
\left \{ \begin{array}{rcl}
	dx_t &=& v_t\,dt \\
	dv_t &=& \left[ \frac{F(x_t)}{k \epsilon^2}  -\frac{v_t}{k \epsilon^2} + \frac{f(x_t)\eta_t}{k \epsilon^2} \right] dt \\
	d\zeta_t &=&  \eta_t dt \\
	d\eta_t &=& -\frac{a \eta_t}{\epsilon^2}\,dt + \frac{\sqrt{2\lambda}}{\epsilon^2}\,dW_t 
\end{array} \right .
\end{equation}
In the framework of Section~\ref{sec:SKa}, defining $\bm{x}_t = (x_t,\zeta_t)^*$ and $\bm{v}_t = (v_t,\eta_t)^*$, and letting $m=\epsilon^2$, the SDE system~(\ref{eq:OUconstFric}) becomes
\begin{equation}
\left \{ \begin{array}{rcl}
d\bm{x}_t & = & \bm{v}_t \,dt \\
md\bm{v}_t &=& \tilde{\bm{F}}(\bm{x}_t)dt - \bm{\gamma}(\bm{x}_t)\bm{v}_t\,dt + \bm{\sigma}(\bm{x}_t)d{W}_t
\end{array}\right .
\end{equation}
with 
\begin{equation}
\tilde{\bm{F}}(\bm{x}_t) = \begin{pmatrix} \frac{{F}(x_t)}{k} \\ 0 \end{pmatrix}, \quad \bm{\gamma}(\bm{x}_t) = \begin{pmatrix} \frac{1}{k} & -\frac{f(x_t)}{k} \\ 0 & a \end{pmatrix},\quad \bm{\sigma}(\bm{x}_t) = \begin{pmatrix} 0 \\ \sqrt{2\lambda} \end{pmatrix}. 
\end{equation} 
To compute the noise-induced drift term, we solve the Lyapunov equation, 
\begin{equation}
	\bm{\gamma}\bm{J} +\bm{J}\bm{\gamma}^* = \bm{\sigma}\bm{\sigma}^*, 
\end{equation}
and note that the Wiener process $W_t$ is one-dimensional. 
We use Mathematica\textsuperscript{\textregistered} to find a closed form for $\bm{J}$, 
\begin{equation}
	\bm{J}(\bm{x}) = \begin{pmatrix} \frac{\lambda f(x)^2}{a(1+a k)} &  \frac{\lambda f(x)}{a(1+a k)} \\
		\frac{\lambda f(x)}{a(1+a k)} & \frac{\lambda}{a} \end{pmatrix}.
\end{equation}
We compute the noise-induced drift in the first component ($i = 1$) using equation~(\ref{eq:spurious}):
\begin{equation}
	\begin{array}{rcl}
	S_1(x) &=&\frac{\partial}{\partial x_{l}}[({\gamma}^{-1})_{1j}({x})]J_{j{l}}(x)\\
	& = & \frac{\lambda f'(x)f(x)}{a^2(1+ k a)}.
	\end{array}
\end{equation}
Therefore, the limiting SDE for $x_t$ is
\begin{equation}
	dx_t =\left[ {F}(x_t) + \frac{\lambda f'(x_t)f(x_t)}{a^2(1+ k a)} \right]dt + \sqrt{\frac{2\lambda}{a^2}}f(x_t)\,dW_t,
\end{equation}
in agreement with \cite{pavliotis}. 

\subsubsection{Thermophoresis}
\label{sec:thermo}

The same type of equation can be used to model thermophoresis, i.e. the movement of small particles in a temperature gradient \cite{piazza2008}. While theoretical models of this phenomenon are still a matter of debate, thermophoresis has been successfully employed experimentally, e.g., to separate and group small particles \cite{piazza2008} and to influence the motion of DNA \cite{duhr2006pnas}. In \cite{hottovyEPL2012} we used equation~(\ref{eq830000}) to model the motion of a particle of mass $m$ driven by a colored noise $\eta_t$ with a short correlation time $\tau$ in an environment where the temperature $T(x)$ depends on the particle's position $x$, and thus $\gamma(x) = \gamma(T(x))$ and $D(x) = D(T(x))$. In the limit as $m,\tau\rightarrow 0$, the noise-induced drift pushes the particle toward the hotter regions or toward the colder regions depending on the ratio $m/\tau$. This was argued in \cite{hottovyEPL2012} using a multi-scale expansion. We now show this using Theorem~\ref{theorem}. We consider the SDE system
\begin{equation}\label{eq:thermotheta}
	\left \{\begin{array}{rcl}
	dx_t &=& v_t\,dt \\
	dv_t &=&\left[ \frac{F(x_t)}{\theta(x_t) \tau} -\frac{1}{\theta(x_t) \tau}v_t + \frac{\sqrt{2D(x_t)}\eta_t}{\theta(x_t) \tau}\right] dt \\
	d\zeta_t &=& \eta_t\,dt \\
	d\eta_t &=& -\frac{2\eta_t}{\tau}\,dt + \frac{2}{\tau}\,dW_t
	\end{array}\right .
\end{equation}
where $W_t$ is a one-dimensional Wiener process and we have introduced the dimensionless quantity
\begin{equation}
	\theta(x) = \theta(T(x)) = \frac{m}{\bm{\gamma}(T(x))\tau}.
\end{equation}
Differently from previous sections, the small parameter is $\tau$, not $m$ (as $\tau$ goes to zero $m$ will go to zero as well). Define $\bm{x} = (x,\zeta)$, $\bm{v} = (v,\eta)$, and 
\begin{equation}
	\bm{\gamma}(\bm{x}) = \begin{pmatrix} \frac{1}{\theta(x)} & -\frac{\sqrt{2D(x)}}{\theta(x)} \\
0 & 2 \end{pmatrix}, \quad \bm{\sigma} = \begin{pmatrix} 0 \\ 2 \end{pmatrix}.
\end{equation}
$\bm{\gamma}$ is invertible and
\begin{equation}
	\bm{\gamma}^{-1}(\bm{x}) = \begin{pmatrix} \theta(x) & \frac{\sqrt{2D(x)}}{2} \\ 0 & \frac{1}{2} \end{pmatrix}.
\end{equation}
To compute the noise-induced drift term, we solve the Lyapunov equation, 
\begin{equation}
	\bm{\gamma}\bm{J} +\bm{J}\bm{\gamma}^* = \bm{\sigma}\bm{\sigma}. 
\end{equation}
A closed form of $\bm{J}$ obtained using Mathematica\textsuperscript{\textregistered} is
\begin{equation}
	\bm{J}(\bm{x}) = \begin{pmatrix} \frac{2 D(x)}{1+2\theta(x)} & \frac{ \sqrt{2D(x)}}{1+2\theta(x)}\\ \frac{ \sqrt{2D(x)}}{1+2\theta(x)} & 1\end{pmatrix}
\end{equation}
Using equation~(\ref{eq:SKlimit}), as $\tau,m\rightarrow 0$ so that $m/\tau$ is constant, we see that the limiting equation for $x$ is
\begin{equation}
	\label{eq:thermolimit}
	dx_t =\left[ \frac{{F}(x_t)}{\theta({x}_t)}+ \frac{\bm{\gamma}(x_t)D'(x_t)-4\theta(x_t)\bm{\gamma}'(x_t)D(x_t)}{2\bm{\gamma}(x_t)(1+2\theta(x_t))} \right] dt + \sqrt{2D(x_t)}\,dW_t,
\end{equation}
which coincides with the result of \cite{hottovyEPL2012}.  
\begin{remark}
Strictly speaking, the system~(\ref{eq:thermotheta}) does not obey the fluctuation-dissipation relation as the time correlations of the noise should be reflected in the friction term, which should become an integral over the past \cite[Section 1.5]{zwanzig}.  The resulting non-Markovian system requires a more refined analysis.
\end{remark}

\subsection{Three-dimensional Brownian motion in a force field}\label{sec:3DBP}
As a generalization of the example in Section~\ref{sec:BP}, we consider a Brownian particle in  $\mathbb{R}^3$. The coefficients consist of a spatially varying noise coefficient $\bm{\sigma}(\bm{x})$ and the fluctuation-dissipation relation \cite{kubo} in multi-dimensional form, i.e.
\begin{equation}
	\bm{\gamma}(\bm{x}) = \frac{\bm{\sigma}(\bm{x})\bm{\sigma}^*(\bm{x})}{k_BT}. 
\end{equation}
A force $\bm{F}$ is acting on the particle.
Equation~(\ref{eq:SDEgeneral}) becomes
\begin{equation} \label{BMforce}
\left\{\begin{array}{rcl}
d\bm{x}_t^m & = & \bm{v}_t^m\,dt \\
d\bm{v}_t^m & = & \left[ \frac{\bm{F}(\bm{x}_t^m)}{m} - \frac{\bm{\sigma}\bm{\sigma}^*(\bm{x}_t^m)}{mk_BT}\bm{v}^m_t \right] \,dt + \frac{\bm{\sigma}(\bm{x}_t^m)}{m}\,d\bm{W}_t
\end{array}\right.
\end{equation}
To find the limiting equations, we solve the Lyapunov equation
\begin{equation}
	\frac{1}{k_BT}\left(\bm{\sigma}\bm{\sigma}^* \bm{J} + \bm{J}\bm{\sigma}\bm{\sigma}^*\right) =  \bm{\sigma}\bm{\sigma}^* 
\end{equation}
obtaining $\bm{J} = \frac{k_BT}{2}\bm{I}$ where $\bm{I}$ is the identity matrix. The limiting equation~(\ref{eq:SKlimit}), as $m\rightarrow 0$, is
\begin{equation}\label{eq113dhdkodl}
	d\bm{x}_t=\left [ (\bm{\sigma}\bm{\sigma}^*(\bm{x}_t))^{-1}k_BT\bm{F}(\bm{x}_t)-k_BT\bm{S}(\bm{x}_t)\right ]dt +  [\bm{\sigma}(\bm{x}_t)^*]^{-1}k_BTd\bm{W}_t,
\end{equation}
where the $i^\text{th}$ component of $\bm{S}$ equals
\begin{equation}
\label{eq:spuriousG}
S_{i}(\bm{x}) =\frac{k_BT}{2}\frac{\partial}{\partial x_{l}}([(\bm{\sigma}\bm{\sigma}^*)^{-1}(\bm{x})]_{i{l}}).
\end{equation}

\begin{remark}
If $\bm{F}$ is a conservative force, i.e. $ \bm{F} = -\nabla{U}$, it can be shown (e.g. by solving the corresponding stationary Fokker-Planck equation) that for $m > 0$  equation~(\ref{BMforce}) has a stationary density $C\exp\left \{ -\frac{{U}(\bm{x})}{k_BT} - \frac{m|\bm{v}|^2}{2k_BT}\right \}$ (Gibbs distribution). In this case, one can recover the formula for $\bm{S}$ by requiring that the limiting equation has $C\exp\left \{ -\frac{{U}(\bm{x})}{k_BT} \right\}$ as its stationary density. For a non-conservative force $\bm{F}$, the stationary solution will not be Gibbs and the limit is identified using Theorem~\ref{theorem}. Interestingly these cases have also been studied experimentally in the presence, e.g., of non-conservative forces arising from hydrodynamic interactions in two dimensions \cite{volpe2008} and optical forces in three dimensions \cite{simpson1997,pesce2009}.
\end{remark}

\subsection{Brownian particle in a three-dimensional magnetic field}\label{sec:magnetic}
We consider a particle of mass $m$ and charge $q$, moving in three dimensions under an external force ${\bm F}({\bm x})$ and a friction force $-{\bm \gamma}({\bm x}){\bm v}$ in the presence of (white) noise ${\bm \sigma}({\bm x}){\bm \eta_t}$.  We assume there is an additional magnetic (Lorentz) force $q{\bm v} \times {\bm B}({\bm x})$, where ${{\bm B}\in\mathbb{R}^3}$ is a magnetic field. Similar problems were studied in \cite{kwon2005structure,cerrai2011,freidlin2012}. The Lorentz force can be written as an action of an (antisymmetric) matrix ${{\bm H}({\bm x}) \in C_{\mathbb{R}^{3\times3}}[0,T]}$ on ${\bm v}$.  While physically ${\bm H}({\bm x})$ does not represent friction, it can be added to the friction term, changing the matrix ${\bm \gamma}$ to a modified one
$$
\tilde{\bm{\gamma}}({\bm x}) = {\bm \gamma}({\bm x}) +  {\bm H}({\bm x}).
$$
Note that $\bm{\gamma}$ and $\tilde{\bm{\gamma}}$ have the same symmetric part and, therefore, Assumption~\ref{assume:bddcoeffs} is preserved. Accordingly, the noise-induced drift $\tilde{\bm S}$ is now calculated, using the solution of the modified Lyapunov equation
\begin{equation}
	\label{eq:LyapMag}
	\bm{{J}}\bm{\tilde{\gamma}}^* + \bm{\tilde{\gamma}}\bm{{J}} = \bm{\sigma}\bm{\sigma}^*,
\end{equation}
In particular, if ${\bm \gamma}$ and ${\bm \sigma}$ satisfy the Einstein relation ${\bm \sigma}{\bm \sigma}^* = 2k_BT {\bm \gamma}$, the solution of the Lyapunov equation is
$$
{\bm J} = k_BT {\bm I},
$$
where ${\bm I}$ is the identity matrix, leading to
$$
\tilde{S}_i({\bm x}) = {k_BT}{\partial \over \partial x_j}[( \bm{\gamma}+ \bm{H})^{-1}_{ij}({\bm x})].
$$
The result in this case is essentially contained (based on different arguments) in \cite{shi2012}. This case is special in that adding an anti-symmetric matrix $\bm{H}$ to $\bm{\gamma}$  does not change the solution of the Lyapunov equation.

\section{Stratonovich form of the limiting equation}\label{sec4}
In general, an It\^o system 
\begin{equation}
	d(x_t)_i = b_i(\bm{x}_t)\;dt + h_{ij}(\bm{x}_t)\;d (W_t)_j
\end{equation}
has an equivalent Stratonovich form
\begin{equation}
	d(x_t)_i = b_i(\bm{x}_t)\;dt - \frac{1}{2} \left( \partial_k(h_{ij}) (\bm{x}_t) \right) h_{kj}(\bm{x}_t) dt + h_{ij}(\bm{x}_t)\circ d(W_t)_j,
\end{equation}
in which the middle term $- \frac{1}{2} \left( \partial_k(h_{ij}) (\bm{x}_t) \right) h_{kj} (\bm{x}_t)$ is the {\it It\^{o}-to-Stratonovich
correction}. We apply it to equation~(\ref{eq:SKlimit}), where $\bm{h} = \bm{\gamma}^{-1}\bm{\sigma}$, getting for the It\^{o}-to-Stratonovich
correction the expression 
\begin{equation}
	-\frac{1}{2} (\partial_k(\gamma^{-1})_{i\ell}) \sigma_{\ell j}(\gamma^{-1})_{km}\sigma_{mj} - \frac{1}{2} (\gamma^{-1})_{i\ell} (\partial_k(\sigma_{\ell j})) (\gamma^{-1})_{km}\sigma_{mj}. 
\end{equation}
In the case when $\bm{\gamma} = \bm{\gamma}^*$ commutes with $\bm{\sigma}$ (and thus
also with $\bm{\sigma}^*$), the solution of the Lyapunov equation~(\ref{eq:Lyapunov}) is 
\begin{equation}
	\bm{J} = \frac{1}{2}\bm{\sigma}\bm{\sigma}^*\gamma^{-1}. 
\end{equation}
Substituting it into the limiting equation~(\ref{eq:spurious}) we see that $\bm{S}$ cancels the first term of the It\^o-to-Stratonovich correction and thus in the Stratonovich language the limiting equation becomes
\begin{equation}
	\label{eq:StratSKlimit}
	d\bm{x}_t = \left[ \bm{\gamma}^{-1}(\bm{x}_t)\bm{F}(\bm{x}_t)+ \bar{\bm{S}}(\bm{x}_t)
	\right] \;dt + \bm{\gamma}^{-1}(\bm{x}_t)\bm{\sigma}(\bm{x}_t)\circ d\bm{W}_t, 
\end{equation}
with
\begin{equation}
	\bar{S}_i(\bm{x}) = - \frac{1}{2}(\gamma^{-1})_{i\ell} (\bm{x}) ( \partial_k(\sigma_{\ell j})(\bm{x}))(\gamma^{-1})_{km} (\bm{x})\sigma_{mj} (\bm{x}).
\end{equation}
For example, in one dimension, equation~(\ref{eq:StratSKlimit}) is 
\begin{equation}
	\label{eq:StratSKlimit1D}
	dx_t = \left (\frac{F(x_t)}{\gamma(x_t)}  -\frac{1}{2}\frac{\sigma(x_t)\sigma'(x_t)}{\gamma^2(x_t) }\right )\;dt + \frac{\sigma(x_t)}{\gamma(x_t)}\;\circ dW_t. 
\end{equation}
It follows that $\bar{\bm{S}}=0$ if the noise matrix $\bm{\sigma}$ is independent of $\bm{x}$. Note that when $\bm{\gamma}(\bm{x}) = \bm{\gamma}$ is independent of $\bm{x}$, the noise-induced drift in the It\^o SDE~(\ref{eq:SKlimit}) is zero.
 
\section{Conclusion}
\label{sec:conclusion}
We have proven convergence of solutions of a class of SDE systems in the small-mass limit.  Generalizing earlier work by several authors, the results apply in arbitrary dimension and allow to include position-dependent friction and noise coefficients, as well as colored noises with suitably scaled correlation times. Our main result (Theorem~\ref{theorem}) provides an alternative to homogenization of SDE obtained by multiscale expansion; while the latter prove convergence in distribution, our method yields stronger $L^2$-convergence. It has a wide range of physically relevant applications, including explanation of actual experiments and prediction of new effects. We have, in particular, discussed applications to Brownian motion in a diffusion gradient, thermophoresis of small particles, and Brownian motion in the presence of non-conservative forces.

\bibliographystyle{plain}
\bibliography{sde_bib2}

\end{document}